\documentclass[11pt,reqno]{amsart}
\usepackage{amsthm,amsmath,amssymb,hyperref}
\usepackage{color}

\newtheorem{theorem}{Theorem}[section]

\newtheorem{lemma}[theorem]{Lemma}
\newtheorem{proposition}[theorem]{Proposition}
\theoremstyle{remark}
\newtheorem{remark}[theorem]{Remark}

\DeclareRobustCommand{\leanlink}[1]{%
    \href
    {#1}
    {\ensuremath{\forall}}
}
\author[Durcik]{Polona Durcik}
\address{Schmid College of Science and Technology, Chapman University, Orange, CA, USA}
\email{durcik@chapman.edu}

\author[Fraccaroli]{Marco Fraccaroli}
 \address{Department of Mathematics and Statistics, 
	University of Massachusetts Lowell,
	Lowell, MA, USA}
\email{marco\_fraccaroli@uml.edu}

\author[Roos]{Joris Roos}
 \address{Department of Mathematics and Statistics, 
	University of Massachusetts Lowell,
	Lowell, MA, USA}
\email{joris\_roos@uml.edu}

\numberwithin{equation}{section}

\title[Weak Hellinger]{A weak Hellinger inequality for noisy Boolean channels}
\date{\today}
\subjclass[2020]{94A17, 60E15, 60C05}
\keywords{Courtade--Kumar conjecture, Hellinger conjecture, Boolean functions}

\begin{document}

\begin{abstract}
A weak form of the Hellinger conjecture of Anantharam, Bogdanov, Chakrabarti, Jayram, and Nair for the binary symmetric channel is proved: dictator functions maximize Hellinger $\Phi$-entropy among all Boolean functions of the input and all one-bit statistics of the output of a noisy channel.
The technical heart of the matter is an explicit inequality in three real parameters, which is proved using explicit polynomial approximations and computer-assisted positivity checks.  The results are also formally verified in Lean 4.
\end{abstract}

\maketitle

\section{Introduction}\label{sec:intro}

Let $n\ge 1$ and let $X=(X_1,\dots,X_n)$ be uniform on $\{-1,1\}^n$.
Fix $\rho\in[-1,1]$ and let $Y$ be obtained from $X$ by passing each coordinate through a binary symmetric channel with correlation parameter $\rho$, so that $Y_i=X_i Z_i$ where $Z_1,\dots,Z_n$ are independent with $\Pr(Z_i=1)=\tfrac{1+\rho}{2}$ and $\Pr(Z_i=-1)=\tfrac{1-\rho}{2}$. In this case we also say that $X$ and $Y$ are $\rho$-correlated and write $X\sim_\rho Y$.

The Courtade--Kumar conjecture asserts that for every Boolean function $f$ from $\{-1,1\}^n$ to $\{-1,1\}$ one has
\begin{equation}\label{eq:courtadekumar}
I(f(X);Y)\;\le\;I(X_1;Y_1),
\end{equation}
with equality attained by dictator functions $f(X)=X_i$ \cite{CK14}. Here $I(X; Y)$ denotes mutual information.   Thus the conjecture can be understood, roughly speaking, as saying that dictator functions maximize mutual information on noisy inputs.
The conjecture is known in the high-noise regime (equivalently, for $|\rho|$ sufficiently small): Samorodnitsky proved that there exists an absolute constant $\delta>0$ such that the conjecture holds whenever $|\rho|\le 2\delta$ \cite{Sam15}.
More recently, Javanmard and Woodruff sharpened the high-noise entropy expansion and extended the range of noise parameters for which the conjecture can be verified \cite{JW26}.
Beyond high noise the conjecture remained open until very recently\footnote{After completing this manuscript, we learned of the very recent preprints \cite{CGJLMNW26,KT26,MB26}, which announce proofs of the Courtade--Kumar conjecture.}; see also the differential-equation approach of Chen, Gohari, and Nair \cite{CGN25}.

A stronger inequality, proposed by Anantharam, Bogdanov, Chakrabarti, Jayram, and Nair \cite{ABCNJ17} and referred to as the Hellinger conjecture, takes the form
\begin{equation}\label{eq:hellinger}
\sqrt{1-(\mathbf{E}f(X))^2}\;-\;\mathbf{E}\sqrt{1-(\mathbf{E}[f(X)\mid Y])^2}
\;\le\;1-\sqrt{1-\rho^2},
\end{equation}
where the quantity on the left-hand side can be written as $\Phi$-entropy in the sense of \cite{ABCNJ17} with $\Phi$ being the squared Hellinger distance. Roughly speaking, the left-hand side measures how much ``Hellinger information'' is revealed about $f(X)$ by knowing the full noisy output $Y$, and the conjecture states that this quantity is maximized by taking $f$ to be a dictator function.
As noted in \cite{ABCNJ17}, \eqref{eq:hellinger} would imply the Courtade--Kumar conjecture \eqref{eq:courtadekumar} by convexity.
The Hellinger conjecture, its weak form, and the three-variable inequality below were originally formulated during the Simons Institute program on Information Theory\footnote{C.~Nair, personal communication} in Spring 2015.
Their motivation was an earlier conjecture of Anantharam, Gohari, Kamath, and Nair \cite[Conjecture~4]{AGKN13Allerton}, relating the mutual information of a binary pair to the hypercontractivity parameter at infinity.

The Hellinger conjecture also has an interesting prediction in the low-noise limit.
For balanced $f$ and $\rho\uparrow 1$, it yields a sharp inequality at exponent $\tfrac12$ for sensitivity:
\begin{equation}\label{eq:sensitivity}
\mathbf{E}\sqrt{s_f(X)}\;\ge\;1,
\end{equation}
where $s_f(x)$ is the sensitivity of $f$ at $x$, i.e. the number of Hamming neighbors of $x$ on which $f$ changes value.
The inequality \eqref{eq:sensitivity} has recently been proved as a consequence of
a certain sharp isoperimetric inequality on the Hamming cube, see \cite{DIRX26} (also see \cite{DIR24,BIM23,BG99,Tal95}).

In the present note we prove a weak form of the Hellinger conjecture, stated in \cite{ABCNJ17} (Conjecture 3), in which conditioning on the full channel output $Y$ is replaced by conditioning on an arbitrary one-bit statistic $g(Y)$.

\begin{theorem}[Weak Hellinger \leanlink{https://github.com/roos-j/lean-weakhellinger/blob/1ef447e6fc6027b97f7b1c8c2ae10ebafeaab5f8/WeakHellinger/Theorems.lean\#L26-L33}]\label{thm:main}
Let $n\ge 1$ and $\rho\in[-1,1]$. Let $X$ be a uniformly distributed random variable on $\{-1,1\}^n$ and $X\sim_\rho Y$.
Then for all Boolean functions $f,g:\{-1,1\}^n\to\{-1,1\}$,
\begin{equation}\label{eq:weak-hellinger}
\sqrt{1-(\mathbf{E}f(X))^2}\;-\;\mathbf{E}\sqrt{1-(\mathbf{E}[f(X)\mid g(Y)])^2}
\le 1-\sqrt{1-\rho^2}.
\end{equation}
\end{theorem}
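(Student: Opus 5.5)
Reduce to a three-parameter inequality. Write $a=\Pr(f(X)=1)$-type parameters as follows. Let $A=\{f=1\}$, $B=\{g=1\}$, and set $\mu=\mathbf{E}f(X)$, $\nu=\mathbf{E}g(Y)$, and $\kappa=\mathbf{E}[f(X)g(Y)]$. Since $g(Y)$ takes only two values, the conditional expectations are explicit:
\[
\mathbf{E}[f(X)\mid g(Y)=\pm1]=\frac{\mu\pm\kappa}{1\pm\nu}.
\]
The left side of \eqref{eq:weak-hellinger} then equals
\[
\sqrt{1-\mu^2}-\tfrac{1+\nu}{2}\sqrt{1-\Bigl(\tfrac{\mu+\kappa}{1+\nu}\Bigr)^2}-\tfrac{1-\nu}{2}\sqrt{1-\Bigl(\tfrac{\mu-\kappa}{1-\nu}\Bigr)^2}.
\]
Writing $u=\frac{1+\nu}{2}$, this is
\[
\sqrt{1-\mu^2}-\tfrac12\sqrt{(1+\nu)^2-(\mu+\kappa)^2}-\tfrac12\sqrt{(1-\nu)^2-(\mu-\kappa)^2}.
\]
So everything depends only on $(\mu,\nu,\kappa)$, and this expression is increasing in $|\kappa - \mu\nu|$ in the appropriate sense. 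The combinatorial input is a bound on the correlation $\kappa$ in terms of $\mu,\nu,\rho$.

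The natural bound comes from hypercontractivity, or rather from the sharp two-function noise-stability estimate. Write $\kappa-\mu\nu=\mathbf{E}[f T_\rho g]-\mu\nu=\sum_{S\ne\emptyset}\rho^{|S|}\hat f(S)\hat g(S)$. Cauchy--Schwarz gives
\[
|\kappa-\mu\nu|\le|\rho|\sqrt{(1-\mu^2)(1-\nu^2)},
\]
since $\sum_{S\neq\emptyset}\hat f(S)^2=1-\mu^2$. This is exactly the correlation achievable by dictators when $\mu=\nu=0$, but for unbalanced $f,g$ it is presumably not sharp. I would nevertheless first test whether the weak inequality already follows from it. I would set $\kappa=\mu\nu+t\sqrt{(1-\mu^2)(1-\nu^2)}$ with $|t|\le|\rho|$, and prove the following explicit inequality in three real parameters $(\mu,\nu,t)\in[-1,1]^3$:
\[
\sqrt{1-\mu^2}-\tfrac12\sqrt{(1+\nu)^2-(\mu+\kappa)^2}-\tfrac12\sqrt{(1-\nu)^2-(\mu-\kappa)^2}\le 1-\sqrt{1-t^2}.
\]
By concavity of the square-root terms in $\kappa$, the left side is convex in $t$. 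Hence it suffices to handle $|t|=|\rho|$ and then use monotonicity of $1-\sqrt{1-t^2}$ in $|t|$. Symmetries ($f\to-f$, $g\to-g$, and $\rho\to-\rho$) let me restrict to $\mu,\nu\ge0$ and $t\ge 0$.

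The main obstacle is this three-variable inequality. Equality holds along $\mu=\nu=0$ and the slack degenerates near the boundary $|\mu|\to1$ or $|\nu|\to1$, so the inequality is tight on a curve. A crude argument will not work there. I would handle it in three steps:
\begin{itemize}
\item[(i)] Square and rationalize the expression, to eliminate the radicals where possible.
\item[(ii)] Expand near the equality set, in small $\mu$ and $\nu$, to verify local validity analytically.
\item[(iii)] On the remaining compact region, replace each square root by explicit polynomial upper and lower bounds and certify positivity of the resulting polynomials. This would be done by interval subdivision or sum-of-squares checks, with computer assistance.
\end{itemize}

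If the Cauchy--Schwarz bound on $\kappa$ proves too lossy for unbalanced $f$ (the inequality fails for some $(\mu,\nu)$ at $t=|\rho|$), I would replace it with a sharper correlation bound that uses Booleanity. One option is the constraint $0\le\Pr(f=\pm1,g=\pm1)$, which forces $|\mu\pm\kappa|\le1\pm\nu$. The other is a level-$1$ inequality, bounding $\sum_{|S|\ge1}\rho^{|S|}\hat f\hat g$ via $\rho\sum_{|S|=1}$ plus $\rho^2$ times the remainder. I would then redo the three-parameter check with that constraint region.
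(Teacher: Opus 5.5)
Your reduction breaks at its central step. The three-parameter inequality you plan to certify is false, so step (iii) would simply find negative values. That inequality is $\Lambda(\mu,\nu,\kappa)\le 1-\sqrt{1-t^2}$, where $\Lambda$ is the left side of \eqref{eq:weak-hellinger} and $t=(\kappa-\mu\nu)/\sqrt{(1-\mu^2)(1-\nu^2)}$ is the correlation of $g(Y)$ and $f(X)$.

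Take $f=\mathrm{OR}(x_1,x_2)$ (so $f=-1$ iff $x_1=x_2=-1$) and $g(y)=y_1$. Then $\mu=\tfrac12$, $\nu=0$, $\kappa=\rho/2$ and $t=\rho/\sqrt3$. At $\rho=0.96$,
\[
\Lambda=\tfrac{\sqrt3}{2}-\tfrac12\sqrt{1-0.98^2}-\tfrac12\sqrt{1-0.02^2}\approx0.267,\qquad 1-\sqrt{1-t^2}\approx0.168 .
\]
Now set $\rho_0:=t\approx0.554$. This configuration satisfies your correlation constraint $|t|\le|\rho_0|$, yet $\Lambda>1-\sqrt{1-\rho_0^2}$. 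So no argument that uses only this constraint can prove the theorem at $\rho_0$. As $\rho\to1$ the example becomes the Z-channel $(\mu,\nu,t)=(\tfrac12,0,\tfrac1{\sqrt3})$, with $\Lambda=\tfrac{\sqrt3-1}{2}\approx0.366$ against $1-\sqrt{2/3}\approx0.184$.

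The mechanism is that $\Lambda$ is large when one cell of the joint law is tiny, here $\Pr(g(Y)=1,f(X)=-1)=(1-\rho)/8$. Correlation, and second-moment or Fourier-weight information generally, cannot see this. Your fallbacks do not repair it:
\begin{itemize}
\item Nonnegativity of the four cells holds automatically for any genuine pair, and all four are strictly positive here.
\item For the level-1 split, $\mathrm{OR}$ has $W^1=\tfrac12$ at mean $\tfrac12$, so any valid level-1 bound there is at least $\tfrac12$. Your bound $|\rho|\sqrt{W^1(f)W^1(g)}+\rho^2\sqrt{W^{\ge2}(f)W^{\ge2}(g)}$, optimized over the weights of a balanced $g$, then already permits $\kappa=0.48$ at $\rho^2\approx0.386$. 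There $1-\sqrt{1-\rho^2}\approx0.217<0.267$.
\end{itemize}

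The missing idea is a constraint that passes from $(X,Y)$ to $(g(Y),f(X))$ by data processing and also penalizes small cells. This is what reverse hypercontractivity provides. The paper uses the hypercontractivity parameter at zero, with $s_0(g(Y),f(X))\le s_0(Y,X)=\rho^2$. It then proves the stronger bound $h(\mathbf{E}V)-\mathbf{E}\,h(\mathbf{E}[V\mid U])\le1-\sqrt{1-s_0(U,V)}$.

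Evaluating the Beigi--Nair/Kamath variational formula for $1-s_0$ at the single point $s\bar c+\bar s d$ gives $\sqrt{1-s_0}\le R(s,c,d)$. Here $R$ is a ratio of KL divergences that vanishes when a cell vanishes (for non-independent pairs), which forces $|\rho|=1$ in that case. In the example above ($s=\tfrac12$, $c=0.51$, $d=0.01$) one gets $R\approx0.689$. Hence $s_0\ge1-R^2\approx0.53$, against $t^2\approx0.31$, and $1-R\approx0.311$ does dominate $\Lambda\approx0.267$.

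The genuine three-parameter problem is then $F(s,c,d)\ge0$ (Proposition~\ref{prop:three-point}). The paper proves it by monotonicity of $G$ in $a$, followed by Bernstein-basis certificates. Your steps (i)--(iii) are the right kind of tool, but they must be aimed at that inequality, not the correlation one.
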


The proof follows a reduction, due to \cite{ABCNJ17}, to an explicit inequality in three real parameters. 
The key of this reduction is to pass through a stronger inequality involving the hypercontractivity parameter at zero associated with the two binary random variables, see \eqref{eqn:stronger} (Conjecture 5 in \cite{ABCNJ17}).
The three-point inequality is proved in \S \ref{sec:threepoint} using careful approximations by Bernstein polynomials and explicit computations verifying a large number of positivity claims.
For completeness we also give the details of the reduction in \S \ref{sec:reduction}.

As a corollary of this result we also recover the known corresponding statement for mutual information (see \cite{AGKN13Allerton}, \cite{PPM18}):
\begin{equation}\label{eqn:weakmutualinf}
I(f(X);g(Y))\le 1-H_2((1-|\rho|)/2),
\end{equation}
where $H_2(t)=-t\log_2 t-(1-t)\log_2(1-t)$ denotes binary entropy and $X,Y,f,g,\rho$ are as in Theorem \ref{thm:main}.
In fact, a stronger statement involving the hypercontractivity parameter at zero also holds, see Remark \ref{rem:mutualinf} below.
We thank Chandra Nair for pointing this out to us.

\subsection{Formalization in Lean}
Numerical evidence for the correctness of the three-point inequality is easy to produce.
The main contribution of this paper is a definitive formal verification, both via traditional mathematical arguments and using the proof assistant Lean 4 \cite{LeanPaper} and its mathematical library \texttt{mathlib} \cite{Mathlib}.
The formalization is available on GitHub at 
\begin{center}
\href{https://github.com/roos-j/lean-weakhellinger}{https://github.com/roos-j/lean-weakhellinger}
\end{center}
It comprises $\sim$500k lines of Lean code and compiling the theorem takes $\sim$40 GB of available RAM and $\sim$2.5 hours on one of the authors' machines.
This is because all explicit computations in the paper (which take just a few seconds to run on usual hardware) must be encoded into Lean kernel terms for typechecking. This comes at an enormous computational cost, despite various optimizations. Most importantly, all computations for the verification use only rational numbers, completely avoiding the need for floating point numbers and interval arithmetic. The Lean kernel has special handling for natural numbers which makes verifications like this efficient enough to be performed using a single consumer-grade machine, even if barely so.

The main statements are annotated with links to the respective Lean statements (\ensuremath{\forall}). The Lean statements are human-written, while the proofs are machine-generated.

\subsection*{Statement on use of AI}
This manuscript was written by the authors. OpenAI’s GPT models played a substantial role in the development of the proof of the three-point inequality through extended discussions with the authors. The formalization of the proofs in Lean was completed using Codex/gpt-6-astra-medium under human supervision.

\subsection*{Acknowledgements}
This work was supported in part by NSF grants DMS-2554859 (P.D.), DMS-2154835 (J.R.), DMS-2555784 (J.R.), and grants from the Simons Foundation.
M.F. was supported in part by an AMS-Simons Travel Grant. The authors also acknowledge the Hausdorff Institute of Mathematics in Bonn, where they first learned about this problem at the Dual Trimester program ``Boolean Analysis in Computer Science'' during Fall 2024 from a talk given by Chandra Nair.
The authors also thank Chandra Nair for helpful discussion of this and related problems, and helpful comments on a previous version of this paper, in particular suggesting \eqref{eqn:weakmutualinf} and Remark \ref{rem:mutualinf}. This paper is also an indirect outgrowth of a SQuaRE workshop on related problems. P.D. and J.R. thank the American Institute of Mathematics for supporting the workshop and their fellow SQuaRE members Irina Holmes, Paata Ivanisvili, and Alexander Volberg.

\section{A three-point inequality}\label{sec:threepoint}

For $x\in [-1,1]$ let $\bar{x} = 1-x$ and $h(x) = \sqrt{1-x^2}$. 
For $u,v\in (0,1)$, define the binary Kullback--Leibler divergence
\[
D(u\mid v)=u\log\frac uv+\bar u\log\frac{\bar u}{\bar v}
\]
with the usual extension to boundary values in the extended real numbers, in particular $D(p|0)=D(p|1)=\infty$, $D(0|q)=\log \frac{1}{\bar q}$, $D(1|q)=\log\frac1q$ and $D(0|0)=D(1|1)=0$.

Define $U=\{(s,c,d)\in(0,1)^3\,:\,c\not=d\}\subset [0,1]^3$ and
\[ R(s,c,d) = \sqrt{\frac{D\!\big(s\bar c+\bar s\,d\mid s\bar d+\bar s\,c\big)}
{s\,D(c\mid d)+\bar s\,D(d\mid c)}} \]
for $(s,c,d)\in U$, and 
with the understanding that $R(s,c,d)$ is defined as 
\[ R(s,c,d)= \limsup_{U\ni(s',c',d')\to (s,c,d)} R(s',c',d')\]
for $(s,c,d)\in [0,1]^3\setminus U$.
Finally, define for $(s,c,d)\in [0,1]^3$,
\[ F(s,c,d) = 1 - h\big(s(\bar d-d)+\bar s(c-\bar c)\big)  +\;
s\,h(\bar d-d) + \bar s\, h(c-\bar c) - R(s,c,d).\]

We have the following theorem. 
\begin{proposition}[Three-point inequality \leanlink{https://github.com/roos-j/lean-weakhellinger/blob/1ef447e6fc6027b97f7b1c8c2ae10ebafeaab5f8/WeakHellinger/Theorems.lean\#L15-L18}]\label{prop:three-point}
For all $s,c,d\in[0,1]$,
\begin{equation}\label{eq:three-point}
F(s,c,d)\ge 0.
\end{equation}
Equality holds if and only if $s\in \{0,1\}$, or $c+d=1$, or $s=\frac12$ and $c = d$.
\end{proposition}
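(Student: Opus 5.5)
The plan is to reduce the three-parameter inequality to finitely many explicit polynomial positivity checks on boxes, after first exploiting symmetries and handling the degenerate and equality regions by hand.

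\emph{Symmetries.} First we record the symmetries of $F$. The substitution $(s,c,d)\mapsto(\bar s,\bar d,\bar c)$ leaves both the KL quotient defining $R$ and the $h$-terms invariant, since $h$ is even. The substitution $(c,d)\mapsto(\bar c,\bar d)$ combined with $s\mapsto\bar s$ should also act compatibly. Using these, we restrict to a fundamental domain, say $s\le\frac12$ and $c+d\le 1$.

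\emph{Equality cases.} Next we verify the equality cases directly.
\begin{itemize}
\item For $s\in\{0,1\}$, the $h$-terms cancel to $1$, and the numerator of $R^2$ becomes $D(d\mid c)$ (or its mirror), which matches the denominator, so $R=1$.
\item For $c+d=1$, we have $\bar d-d=c-\bar c$, so the $h$-terms give $1$. The two arguments of $D$ in the numerator coincide, so $R=0$. Here I should double-check that $F=0$ and not $F=1$ from the normalization. Presumably one of $c,d$ enters with opposite sign, making $1-h(\cdot)+\dots-R$ vanish exactly.
\item The diagonal $c=d$ with $s=\frac12$ is handled through the limsup definition of $R$.
\end{itemize}

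\emph{Local analysis near the equality set.} Near the equality set we need a Taylor expansion. Writing $c=\frac12+a$ and $d=\frac12+b$, we expand $R^2$ and the $h$-part to leading order in the distance to $\{c+d=1\}$ and to $\{s=0\}$. The aim is to show that the leading coefficient is strictly positive, uniformly in the remaining variables. This gives $F\ge \kappa\cdot(\text{distance})^k$ on an explicit neighbourhood, with explicit remainder bounds obtained from Taylor's theorem with rational bounds on derivatives. The $c=d$ line needs separate care because $R$ is a $0/0$ limit there. We compute the limit $R(s,c,c)$ explicitly, as a ratio of second-order Fisher-information-type expressions, which should give an algebraic function of $s$ and $c$, and check that it lies below the $h$-part, with equality only at $s=\frac12$.

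\emph{Compact remainder.} Away from these neighbourhoods, $F$ is bounded below by a positive constant. We split the remaining compact region into boxes. On each box we replace $\log$ and $\sqrt{\cdot}$ by rational polynomial upper or lower bounds: truncated series with explicit error, or Bernstein approximations as the paper suggests. To avoid the square root in $R$, we instead prove
\[
(1-h(\cdot)+s h+\bar s h)^2\,\bigl(sD(c\mid d)+\bar sD(d\mid c)\bigr)\ \ge\ D(\cdot\mid\cdot)
\]
when the left factor is nonnegative. Both sides become polynomials in $(s,c,d)$ with rational coefficients up to controlled error. Positivity on each box is then certified by expanding in the Bernstein basis and checking that all coefficients are positive, subdividing adaptively where they are not.

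\emph{Main obstacle.} The hard part is the boundary layer where $F$ vanishes: near $c=d$, near $c+d=1$, and near the corners $c,d\in\{0,1\}$. There $D$ blows up logarithmically and polynomial approximation of $\log$ fails. I would handle the corners by an asymptotic comparison. Both the numerator and the denominator of $R^2$ are dominated by terms like $\log(1/d)$, and their ratio tends to an explicit limit in $s$, which must be shown to be strictly below the $h$-side. This boundary analysis must also be glued quantitatively to the box certification.
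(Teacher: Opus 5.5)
Your outline has a genuine gap, and it sits where you put the ``main obstacle'': nothing in the plan works near the zero set of $F$, and the local step you propose there fails.

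In the coordinates $a=c+d-1$, $b=d-c$, $t=1-2s$ one has $F=\delta-\widetilde R$. Expanding in $a$ gives
\[
F=(1-t^2)\Big(\frac{\partial_x^2k(0,b)}{4\,k(0,b)}-\frac{1}{2h(b)^3}\Big)a^2+O(a^3),
\qquad
\frac{\partial_x^2k(0,b)}{4\,k(0,b)}-\frac{1}{2h(b)^3}=\frac{b^2}{12}+O(b^4).
\]
On $b=0$ the exact formula \eqref{eq:diagonal} gives $F=\tfrac14t^2(1-t^2)a^4+O(a^6)$. So the coefficient of $(c+d-1)^2$ is not uniformly positive. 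It vanishes where the plane $c+d=1$ meets $s\in\{0,1\}$. It also vanishes along $c=d$, where the vanishing becomes quartic, and that quartic coefficient degenerates again at $t=0$, i.e.\ on the equality line $\{s=\frac12,\ c=d\}$.

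A bound $F\ge\kappa\,\mathrm{dist}^k$ on explicit neighbourhoods would therefore need a multi-scale analysis at all these intersections, which you do not supply. The Bernstein step cannot take over there either. A polynomial with nonnegative Bernstein coefficients that vanishes at an interior point of its box is identically zero, and fixed-accuracy replacements of $\log$ and $\sqrt{\cdot}$ are useless where $F\to0$.

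Your corner step is also incorrect. Near $(c,d)=(0,1)$, take $c=\epsilon$ and $1-d=\epsilon^{\kappa}$. Then the $h$-part tends to $1$ and $R^2\to\min(1,\kappa)/(s\kappa+\bar s)$. So the limit depends on the direction of approach, not only on $s$, and for $\kappa=1$ (which includes approach along $c+d=1$) it gives $F\to0$. There is no limit ``strictly below the $h$-side''.

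Two smaller errors:
\begin{itemize}
\item On $c+d=1$ one has $R=1$, not $0$: numerator and denominator both equal $D(\bar c\mid c)=D(c\mid\bar c)$, which is why $F=0$ there.
\item $(s,c,d)\mapsto(\bar s,\bar c,\bar d)$ is not a symmetry, since it turns $h(at-b)$ into $h(at+b)$. The valid ones are $(\bar s,d,c)$ and $(s,\bar c,\bar d)$.
\end{itemize}
Your treatment of the diagonal $c=d$ is fine and matches \eqref{eq:diagonal}.

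The missing idea is to eliminate the equality set structurally before any computation. After squaring, the inequality for $a,b>0$, $|t|<1$ reads $\delta^2\beta\ge k(at,b)$, and the right-hand side depends on $(a,t)$ only through $u=at$. Fix $(b,u)$ and let $G(a,b,u)$ be the left-hand side at $t=u/a$. Then $G=k(u,b)$ at the endpoint $a=|u|$, that is $|t|=1$ (for $u=0$ take $a\to0^+$). So it suffices to prove $\partial_aG>0$ for $|u|<a<1-b$ (Proposition~\ref{prop:main}), which is a \emph{strict} inequality with no equality cases.

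The paper then proceeds in four steps:
\begin{itemize}
\item It factors $\partial_aG$ as a positive term times $\Phi$, in which the logarithms enter only through $L(y^2)$ and $L(z^2)$.
\item Explicit bounds such as \eqref{eq:H2_bounds}, \eqref{eq:int_bd}, \eqref{eq:H3_der} remove these, with no boxing or corner asymptotics; the $1/\sqrt{1-x}$ term in \eqref{eq:H2_bounds} captures the endpoint singularity.
\item The parametrization $v=2j/(1+j^2)$, $y=2k/(1+k^2)$ makes all but one square root rational.
\item Squaring away $\sqrt{\mathcal Q}$ leaves seven integer polynomials on $[0,1]^3$ with all Bernstein coefficients nonnegative, and a few positive witness coefficients cover the faces.
\end{itemize}
Without some reduction of this kind to a strict inequality, the certification step in your plan has nothing it can certify near the zero set.
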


We introduce the auxiliary function
\[
 K(u \mid   v)=\left\{\begin{array}{ll}
 D(u\mid  v)(u-v)^{-2}   & u\ne v, \\
(2v(1-v))^{-1}  & u=v.
 \end{array} \right .
\]
The value at $u=v$ is set such that $K$ is continuous across the diagonal $u=v$. 
For $c,d\in(0,1)$ and $s\in[0,1]$, we have
\[
 R(s,c,d)=\sqrt{\frac{K(s\bar c+\bar s d\mid  s\bar d+\bar s c)}{sK(c\mid  d)+\bar sK(d\mid  c)}}.
\]
Furthermore, if at least one of $c,d$ is $0$ or $1$, we have
\[
 R(s,c,d)=\left\{\begin{array}{ll}
 1   & \text{if
$s\in\{0,1\}$ or $c+d=1$,} \\
0  & \text{otherwise.}
 \end{array} \right .
\]

We begin the proof of Proposition \ref{prop:three-point} with the following integral representation of the quantity $K$.

\begin{lemma}\label{lem:K-integral}
For all $u,v\in(0,1)$,
\[
 K(u\mid  v)=\int_0^1\frac{1-\lambda}{((1-\lambda)v+\lambda u)\,(1-(1-\lambda)v-\lambda u)}\,\mathrm{d}\lambda .
\]
In particular,  $K$ is
 positive and continuous on $(0,1)^2$.
\end{lemma}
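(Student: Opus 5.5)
The plan is to obtain the formula from Taylor's theorem with integral remainder, applied to the convex function $\phi(w)=D(w\mid v)$ of the first argument. Fix $v\in(0,1)$ and put
\[
\phi(w)=w\log\frac wv+(1-w)\log\frac{1-w}{1-v},\qquad w\in(0,1).
\]
The function $\phi$ is smooth on $(0,1)$. A direct computation gives
\[
\phi(v)=0,\qquad \phi'(w)=\log\frac wv-\log\frac{1-w}{1-v},\qquad \phi''(w)=\frac1w+\frac1{1-w}=\frac{1}{w(1-w)} .
\]
In particular $\phi'(v)=0$.

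For $u\in(0,1)$, the segment $w_\lambda=(1-\lambda)v+\lambda u$ with $\lambda\in[0,1]$ stays in the compact subinterval $[\min(u,v),\max(u,v)]\subset(0,1)$. Taylor's formula with integral remainder along this path reads
\[
\phi(u)=\phi(v)+\phi'(v)(u-v)+(u-v)^2\int_0^1(1-\lambda)\,\phi''(w_\lambda)\,\mathrm d\lambda .
\]
This is easily checked by integrating by parts in $\lambda$, using $\frac{d}{d\lambda}\phi(w_\lambda)=(u-v)\phi'(w_\lambda)$. The first two terms vanish, so
\[
D(u\mid v)=(u-v)^2\int_0^1\frac{1-\lambda}{w_\lambda(1-w_\lambda)}\,\mathrm d\lambda .
\]
For $u\neq v$, dividing by $(u-v)^2$ gives the claimed identity. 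For $u=v$ we have $w_\lambda\equiv v$, and the integral equals $\frac{1}{v(1-v)}\int_0^1(1-\lambda)\,\mathrm d\lambda=\frac{1}{2v(1-v)}$, which is exactly the value $K(v\mid v)$ prescribed in the definition. So the representation holds on all of $(0,1)^2$.

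Positivity follows because the integrand is strictly positive for $\lambda\in[0,1)$, since $w_\lambda\in(0,1)$. For continuity, take $(u,v)$ ranging over a compact square $[\delta,1-\delta]^2$. Then $w_\lambda\in[\delta,1-\delta]$, so the integrand is jointly continuous in $(\lambda,u,v)$ and bounded by $\delta^{-2}$. Continuity of $K$ then follows from dominated convergence (or uniform continuity on compacts), and since every point of $(0,1)^2$ lies in the interior of such a square, $K$ is continuous on $(0,1)^2$.

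No step is a real obstacle. The only point needing care is that the path $w_\lambda$ stays inside $(0,1)$, where $\phi$ is smooth; this holds because $(0,1)$ is convex. The rest is the routine verification of the derivatives of $\phi$.
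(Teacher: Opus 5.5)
Your proof is correct and follows the paper's route: Taylor's formula with integral remainder applied to $D(\cdot\mid v)$. The paper phrases this through $\varphi(x)=x\log x+(1-x)\log(1-x)$ and the identity $D(u\mid v)=\varphi(u)-\varphi(v)-\varphi'(v)(u-v)$, which has the same second derivative $1/(x(1-x))$; it then makes the same check on the diagonal, and your compact-square/dominated-convergence argument spells out the continuity step that the paper states without detail.
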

\begin{proof}
    
 For $x\in (0,1)$, the function $\varphi(x)=x\log x+(1-x)\log(1-x)$ is smooth with 
 \[
 \varphi'(x)=\log\frac{x}{1-x},\qquad
 \varphi''(x)=\frac1x+\frac1{1-x}=\frac1{x(1-x)},
\]
and 
\[D(u\mid v)=\varphi(u)-\varphi(v)-\varphi'(v)(u-v).\]
By the Taylor's formula with integral remainder, \[
 D(u\mid  v)=(u-v)^2\int_0^1 (1-\lambda)\,
 \varphi''\big((1-\lambda)v+\lambda u\big)\,\mathrm{d}\lambda\]
 \[
 =(u-v)^2\int_0^1\frac{1-\lambda}
 {((1-\lambda)v+\lambda u)\,(1-(1-\lambda)v-\lambda u)}\,\mathrm{d}\lambda .
\]
For $u\ne v$, division by $(u-v)^2$ proves the formula. For $u=v$, the
integral equals $(2v(1-v))^{-1}$, as required by the definition of $K$.
Positivity and continuity follow from the integral representation. 
\end{proof}

Note that the desired inequality \eqref{eq:three-point} is equivalent to
\[ h(2q-1)-s\,h(2d-1)-\bar s\,h(2c-1)+R(s,c,d)\leq1.\]
Suppose first that at least one of $c,d$ belongs to $\{0,1\}$. If
$s\in\{0,1\}$ or $c+d=1$, then $R=1$, and direct substitution gives equality
in \eqref{eq:three-point}. Otherwise $R=0$ and $s\in(0,1)$, so the value of $F(s,c,d)$ is at least $1 - h(2q-1) \geq 0$. Equality can hold only if
$q=\tfrac12$ and $h(2c-1)=h(2d-1)=0$. Since $c+d\ne1$, this means
$c=d\in\{0,1\}$. In these two cases $q=s$ and $q=1-s$, respectively, so
equality holds exactly when $s=\tfrac12$. Thus the boundary cases, including
their equality cases, follow. So from now on assume  $c,d\in(0,1)$.

  Introduce new coordinates
\[
 a=c+d-1,\qquad b=d-c,\qquad t=1-2s.
\]
Equivalently, $c=\tfrac{1}{2}(1+a-b)$, $d=\tfrac{1}{2}(1+a+b)$, $s=\tfrac{1}{2}(1-t)$, $\bar s=\tfrac{1}{2}(1+t)$, i.e. 
\[
 2c-1=a-b,\qquad 2d-1=a+b,\qquad 2p-1=u+b,\qquad 2q-1=u-b, 
\]
and that
$c,d\in (0,1)$ if and only if  $|a|+|b|<1$.  
Moreover, $a=0$ corresponds to $c+d=1$, $|t|=1$ to $s\in\{0,1\}$, $b=0$ to $c=d$, and $t=0$ to $s=\tfrac12$.

The maps  $(s,c,d)\mapsto(\bar s,d,c), (s,c,d)\mapsto(s,\bar c,\bar d)$
send $(a,b,t)$ to $(a,-b,-t)$ and $(-a,-b,t)$, respectively, and they preserve the boundary conditions   $|t|=1$ and $a=0$. They   leave  the value of $F(s,c,d)$ invariant, and their compositions allow us to assume 
\[
 a\geq0,\qquad b\geq0,\qquad a+b<1,\qquad -1\leq t\leq1.
\]
The three equality cases $a=0$, $|t|=1$, and $b=t=0$ are preserved by these transformations.

For $|x|+b<1$ set
\[
 k(x,b)=K\big(\tfrac{1}{2}(1+x+b)\,\big|\,\tfrac{1}{2}(1+x-b)\big)
 =\int_{-1}^1\frac{1-v}{1-(x+bv)^2}\,\mathrm{d} v .
\]The condition $|x|+b<1$ ensures $|x+bv|<1$ for  $x\in [-1,1]$, so $k(x,b)>0$.
Denote
\[\widetilde{R}(a,b,t) = R(\tfrac{1}{2}({1-t},{1+a-b},{1+a+b})),\]
i.e. the quantity   $R$ expressed in the new variables. Then
\[     \widetilde{R}^2(a,b,t)=\frac{k(at,b)}{\beta(a,b,t)}, \]
where
\begin{equation}
    \label{eq:r2b}
    \beta(a,b,t)=\frac{1+t}{2}\,k(a,b)+\frac{1-t}{2}\,k(-a,b).
\end{equation}
Denote also
\begin{equation}
    \label{eq:eC}
    \gamma(a,b,t)=\frac{1+t}{2}\,h(a-b)+\frac{1-t}{2}\,h(a+b), 
\end{equation}
and
  \[\delta(a,b,t)=1-h(at-b)+\gamma(a,b,t) .\]
Then  \eqref{eq:three-point} is  equivalent to
\begin{equation}
    \label{eq:CR}   \delta(a,b,t)\ge \widetilde{R}(a,b,t)
\end{equation}
for all $a,b\ge 0$, $a+b<1$, and  $-1\le t\le 1$.

Direct substitution gives $\widetilde{R}=\delta=1$
if $|t|=1$ or $a=0$.
 If $b=0$, then $k(a,0)=2/h(a)^2$, so $\beta=2/h(a)^2$, $\gamma=h(a)$, and
\begin{equation}\label{eq:diagonal}
 \widetilde{R}(a,0,t) =\frac{h(a)}{h(ta)},\qquad
 \delta(a,0,t)-\widetilde{R}(a,0,t)
 =\frac{\bigl(1-h(ta)\bigr)\bigl(h(ta)-h(a)\bigr)}{h(ta)}\ \ge0 .
\end{equation}
Here $0<h(a)\leq h(ta)\leq1$ and equality holds exactly when $ta=0$ or $|ta|=a$,
that is, $a=0$, $t=0$, or $|t|=1$. 

Let now $a,b>0$ and $|t|<1$. Since $k, \beta$, and $\delta$ are  positive,  \eqref{eq:CR} is equivalent to
\begin{equation}\label{eq:C2B}
 \delta^2(a,b,t)\beta(a,b,t)\ \ge\ k(at,b). 
\end{equation}
Note  that the right-hand side depends on $(a,t)$ only through the product  $ta$, which motivates the following definition.

For
  $a>0,\ b>0,\ a+b<1,\ |u|\le a$, define
   \[G(a,b,u) = \delta^2 (a,b,u/a) \beta (a,b,u/a) \]
   \[ = \Big( 1-h(u-b) + \frac{a+u}{2a}\,h(a-b)+\frac{a-u}{2a}\,h(a+b) \Big)^2  \Big (\frac{a+u}{2a}\,k(a,b)+\frac{a-u}{2a}\,k(-a,b) \Big).  \]
It then suffices to prove the following monotonicity statement. 
\begin{proposition}\label{prop:main}
For all $a,b>0$, $a+b <1$,  and all $u\in \mathbb{R}$ with $|u|<  a$, 
   \[\partial_a G(a,b,u) >0.   \]
\end{proposition}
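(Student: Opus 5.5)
The plan is to write $G=\Delta^2 B$, where, with $\theta=u/a\in(-1,1)$ held fixed in form but $u$ fixed as the variable,
\[
\Delta(a,b,u)=1-h(u-b)+\tfrac12\big(h(a-b)+h(a+b)\big)+\tfrac{u}{2a}\big(h(a-b)-h(a+b)\big),
\]
\[
B(a,b,u)=\tfrac12\big(k(a,b)+k(-a,b)\big)+\tfrac{u}{2a}\big(k(a,b)-k(-a,b)\big).
\]
Then $\partial_a G=\Delta\,(2B\,\partial_a\Delta+\Delta\,\partial_a B)$. Since $\Delta>0$, it suffices to show $2B\,\partial_a\Delta+\Delta\,\partial_a B>0$. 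This expression is affine-quadratic in $u$ through the explicit $u/a$ factors and the single nonexplicit term $h(u-b)$. Note that $1-h(u-b)$ does not depend on $a$, so $\partial_a\Delta$ and $\partial_a B$ are affine in $u$, with coefficients given by the even and odd parts in $a$ of $h(a\mp b)$ and $k(\pm a,b)$ and their $a$-derivatives. The only non-polynomial $u$-dependence is therefore in $\Delta$ through $h(u-b)\in[h(a+b),1]$ on $|u|<a$.

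First I will record the structure of $k$. From the integral representation of Lemma \ref{lem:K-integral} in the form $k(x,b)=\int_{-1}^1\frac{1-v}{1-(x+bv)^2}\,dv$, $k$ is convex in $x$, and all even $x$-derivatives are positive. This gives $\partial_a\tfrac12(k(a,b)+k(-a,b))>0$ and $k(a,b)-k(-a,b)$ of a definite sign; expanding in $v$ shows it has the sign of $a b$ times something negative, since the weight $1-v$ favours $v<0$. These facts also yield closed forms in logarithms, $k$ being explicitly integrable, which I will use for the computer-assisted part.

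Second, I will reduce the $u$-dependence. For the term $1-h(u-b)$ I will use the bounds $0\le 1-h(u-b)\le 1-h(a+b)$, or better, a sign analysis. The coefficient multiplying $\Delta$ is $\partial_a B$, which I expect to be positive, since $\partial_a B$ is the $a$-derivative of a secant slope of the convex function $k(\cdot,b)$. Hence the worst case for the $h(u-b)$ term is its smallest value of $\Delta$, which gives a lower bound replacing $h(u-b)$ by $1$ wherever $\partial_a B>0$. After this replacement, the target expression becomes a quadratic polynomial in $u$ on $[-a,a]$ with coefficients that are explicit functions of $(a,b)$. It then suffices to check positivity at the endpoints $u=\pm a$ together with the vertex or discriminant condition. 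At $u=\pm a$ the expression reduces to one-variable-slice statements tied to the equality cases $|t|=1$, where $G$ should still be strictly increasing. If the crude replacement loses too much near $u\approx b$, where $h(u-b)=1$ is sharp anyway, I will instead keep $h(u-b)$ and treat the problem as a genuinely three-variable positivity claim.

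The main obstacle is the final positivity of an explicit function of $(a,b,u)$ on the simplex-like region, especially near the degenerate boundaries $a\to0$, $b\to0$ and $a+b\to1$. At $a\to0$ everything is tangent: $G$ is constant to leading order in $a$, so one must factor out the correct power of $a$, presumably $a$ itself, from $\partial_a G$ via Taylor expansion of the even/odd parts. At $a+b\to1$, $k$ blows up logarithmically. I would first try to factor out the vanishing orders analytically, then substitute rational reparametrizations such as $h(a\pm b)$ via $a\pm b=\frac{2\tau}{1+\tau^2}$, and replace the logarithms in $k$ by certified rational upper and lower bounds (Padé/Bernstein type). Finally I would verify the resulting polynomial inequalities on a subdivision of the domain by Bernstein coefficient positivity with exact rational arithmetic, refining boxes near the corners.
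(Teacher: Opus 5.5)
\textbf{The reduction step fails.} Your preliminary facts are correct. In particular $\partial_aB>0$, because $B$ is the chord of the strictly convex function $k(\cdot,b)$ over $[-a,a]$, evaluated at $u$. The reduction you build on them, however, is false. Replacing $h(u-b)$ by $1$ turns $\Delta$ into $\Delta_1=\gamma(a,b,u/a)$, so you are reduced to showing $\partial_a(\Delta_1^2B)>0$.

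Take $u=0$, where both factors are even in $a$. Use $h''(b)=-(1-b^2)^{-3/2}$, $k(0,b)=2L(b^2)$, and
$\partial_x^2k(0,b)=\int_{-1}^1\frac{2+6b^2v^2}{(1-b^2v^2)^3}\,dv=\frac{4}{(1-b^2)^2}$ (the integrand is $2\frac{d}{dv}\frac{v}{(1-b^2v^2)^2}$). Then
\[
\Delta_1^2B=2(1-b^2)L(b^2)+\frac{2\bigl(1-L(b^2)\bigr)}{1-b^2}\,a^2+O(a^4).
\]
Since $L(b^2)>1$, this is decreasing for small $a>0$, for every $b\in(0,1)$. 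Keeping the discarded term gives instead
\[
\Delta^2B=2L(b^2)+\frac{2\bigl(1-\sqrt{1-b^2}\,L(b^2)\bigr)}{(1-b^2)^2}\,a^2+O(a^4),
\]
whose $a^2$-coefficient is positive only because $\sqrt{1-b^2}\,L(b^2)<1$. So $1-h(u-b)$ is not an error term you can drop: it carries the positivity. At $u=0$ its size, $1-\sqrt{1-b^2}\approx b^2/2$, is of the same order as the margin. Your diagnosis is also backwards: the replacement is exact at $u=b$ and worst away from it. The quadratic-in-$u$ endpoint/discriminant check therefore has nothing true to verify.

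\textbf{The fallback is not yet a proof.} Your fallback (keep $h(u-b)$, replace the logarithms by rational bounds, run Bernstein checks on a refined subdivision) misses what makes the computation feasible, for two reasons.

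\emph{A degenerate segment.} The inequality degenerates along a whole segment, not only at corners. For $b=0$, where $G$ extends smoothly, one computes exactly $\partial_aG=\frac{4a\delta}{h(a)^4}\bigl(1-h(u)\bigr)$. At $u=0$ the quotient $\partial_aG/a$ is of order $b^2$. Hence $\partial_aG/a$ vanishes to second order along $\{b=u=0\}$, which is the equality case $b=t=0$ of the three-point inequality. Any certified lower bound must reproduce this vanishing exactly, and refining boxes cannot help.

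\emph{Cancellation in $k$.} In closed form, $k(\pm a,b)$ is $b^{-2}$ times a combination of logarithms whose $O(b)$ parts cancel. Bounds on the individual logarithms therefore lose all accuracy as $b\to0$.

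The paper handles both issues by an exact reduction. Set $\gamma_0=\gamma(a,b,0)$, $y=b/(\gamma_0^2+b^2)$, $z=ab/\gamma_0^2$, $w=tz$, $\eta=1-z^2$. Then $\gamma=\gamma_0(1+w)$ and $\frac{b}{2y}\beta=L(y^2)-w\theta/\eta$; the logarithms recombine into $\log\frac{1+y}{1-y}$ and $\log\frac{1+z}{1-z}$ with $y,z=O(b)$. This gives $\partial_aG=\frac{4a\delta}{\gamma_0^3\eta^2}\Phi$, where $\Phi$ is built from the bounded quantities $L(y^2)$, $\mu$, $\theta$, which have elementary two-sided bounds accurate to the needed order. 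The remaining steps are:
\begin{itemize}
\item a case split at $w=0$, plus a refined bound where $y\ge\frac45$, $z/y\ge\frac45$, $w\ge z^2$;
\item the parametrization $z/y=\frac{2j}{1+j^2}$, $y=\frac{2k}{1+k^2}$, which leaves only $\sqrt{\mathcal Q}=\eta\,h(u-b)$, to be squared away;
\item blow-up substitutions such as $k=x_1x_2x_3$, $t=-x_3$ followed by division by $x_1^2x_3^2$, which produce polynomials with nonnegative Bernstein coefficients and positive witnesses.
\end{itemize}
Without an exact desingularization of this kind, your plan does not close.
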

The desired inequality \eqref{eq:C2B} is then obtained as an immediate consequence.
\begin{proposition}
    For $a>0$, $b>0$, $a+b<1$ and $|t|<1$, we have $\delta^2(a,b,t)\beta(a,b,t) >k(at,b)$, and hence the strict inequality in \eqref{eq:CR}. 
\end{proposition}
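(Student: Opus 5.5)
We deduce the statement from Proposition \ref{prop:main} by integrating in $a$ with the product $u=at$ held fixed. Fix $b>0$ and $u$, and let $a$ range over $(|u|,1-b)$. At the left endpoint $a=|u|$ we have $t=u/a=\pm1$ (or, if $u=0$, the endpoint $a\to0$), and there the inequality degenerates into an equality. Since $G(a,b,u)$ is strictly increasing in $a$, we will get $G(a,b,u)>\lim_{a'\downarrow|u|}G(a',b,u)$, and it remains to identify this limit with $k(u,b)$.

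\textbf{Step 1 (case $u\neq0$).} In this case $a'=|u|>0$ is an interior point of the admissible range of $a$, since $|u|<a<1-b$. The formula defining $G$ extends continuously to $|u|=a$, because $k$ and $h$ are continuous by Lemma \ref{lem:K-integral}. For $u=a'>0$, i.e.\ $t=1$, we compute
\[
\delta=1-h(a'-b)+h(a'-b)=1,\qquad \beta=k(a',b)=k(u,b),
\]
so $G(a',b,u)=k(u,b)$. The case $u=-a'$ is symmetric: there $t=-1$, so $\delta=1$ and $\beta=k(-a',b)=k(u,b)$. On $(|u|,a]$ we have $\partial_aG>0$ by Proposition \ref{prop:main}, and $G$ is continuous up to $|u|$. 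The mean value theorem therefore gives $G(a,b,u)>G(|u|,b,u)=k(u,b)$.

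\textbf{Step 2 (case $u=0$, i.e.\ $t=0$).} Here we need the limit as $a\downarrow0$ of
\[
G(a,b,0)=\Big(1-h(b)+\tfrac12\big(h(a-b)+h(a+b)\big)\Big)^2\cdot\tfrac12\big(k(a,b)+k(-a,b)\big).
\]
As $a\to0$ this tends to $1^2\cdot k(0,b)=k(0,b)$, by continuity of $h$ and $k$ at $a=0$ (note $b<1$). Since $G(\cdot,b,0)$ is strictly increasing on $(0,a]$, we get $G(a,b,0)>\lim_{a'\downarrow0}G(a',b,0)=k(0,b)$.

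\textbf{Conclusion.} In both cases $\delta^2(a,b,t)\beta(a,b,t)=G(a,b,at)>k(at,b)$, which is the strict form of \eqref{eq:C2B}. Since $\delta>0$ and $\beta>0$, taking square roots gives $\delta>\widetilde R$, which is strict \eqref{eq:CR}. The only care needed is the continuity of $G$ up to the endpoint, which follows from Lemma \ref{lem:K-integral}, and checking that $\delta=1$ at $|t|=1$. Both are routine; the real difficulty lies in Proposition \ref{prop:main} itself, which we take as given.
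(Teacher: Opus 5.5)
Your proposal is correct and matches the paper's proof: fix $b$ and $u=at$, use the strict monotonicity of $a\mapsto G(a,b,u)$ from Proposition \ref{prop:main}, and identify $k(u,b)$ with either the value $G(|u|,b,u)$ (for $u\neq0$, where $t=\pm1$) or the limit as $a\to0^+$ (for $u=0$). Your explicit checks that $\delta=1$ and $\beta=k(\pm a,b)$ at $t=\pm1$, and your use of the mean value theorem at the endpoint, fill in details the paper leaves to direct substitution.
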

\begin{proof}
Fix $b,u$ with $|u|+b<1$. Proposition~\ref{prop:main} shows that
$a\mapsto G(a,b,u)$ is strictly increasing for $|u|<a<1-b$. If $u\ne0$,
$F$ is continuous at $a=|u|$, and the case $t=u/a=\pm1$ gives
$G(|u|,b,u)=k(u,b)$. If $u=0$, continuity of $r$ and $k$ gives
\[
 \lim_{a\to 0+}G(a,b,0)=k(0,b).
\]
Hence $G(a,b,u)>k(u,b)$ whenever $a>|u|$. The claim then follows by choosing $u=ta$. 
\end{proof}

For $0\leq x<1$, we introduce 
\[
 L(x) =\int_0^1\frac{\mathrm{d}\lambda}{1-x \lambda^2}.
\]
We have $L(0)=1$ and for $x >0$, by integrating, 
    \begin{equation}\label{eq:H1_expr}
    L(x)=\frac1{2\sqrt x}\log \frac{1+\sqrt{x}}{1-\sqrt{x}} .
    \end{equation}
In the following lemmas, we collect some auxiliary approximation properties that will be used in the proof of Proposition~\ref{prop:main}.
\begin{lemma} 
    For all $x\in [0,1)$, 
    \begin{equation}\label{eq:H2_bounds}
 1+\frac x3\leq L(x)\leq
 \frac{7+x}{15}+\frac8{15\sqrt{1-x}},
 \end{equation}
 and for all $0\le y\le x < 1$, 
 \begin{equation}\label{eq:H3_der}
 L(x) - L(y) \leq\frac12\log\frac{1-y}{1-x}-\frac{x-y}{6}.
 \end{equation}
\end{lemma}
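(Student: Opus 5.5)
The plan is to work with the series expansion $L(x)=\sum_{k\ge0}\frac{x^k}{2k+1}$, which follows from expanding $(1-x\lambda^2)^{-1}$ and integrating term by term, and to compare it coefficientwise with the series of the proposed bounds.

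\textbf{Lower bound in \eqref{eq:H2_bounds}.} All coefficients of the series are nonnegative for $x\ge0$. Keeping only the first two terms gives $L(x)\ge 1+x/3$ immediately.

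\textbf{Upper bound in \eqref{eq:H2_bounds}.} Expand
\[
(1-x)^{-1/2}=\sum_k \binom{2k}{k}4^{-k}x^k .
\]
The right-hand side then has series
\[
1+\Big(\tfrac1{15}+\tfrac8{15}\cdot\tfrac12\Big)x+\sum_{k\ge2}\tfrac8{15}\binom{2k}{k}4^{-k}x^k .
\]
The constant term is $\frac7{15}+\frac8{15}=1$ and the linear coefficient is $\frac13$, so both match $L$. It therefore suffices to show $\frac1{2k+1}\le \frac8{15}c_k$ for $k\ge2$, where $c_k=\binom{2k}{k}4^{-k}$.

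At $k=2$ we have $c_2=\frac38$, so both sides equal $\frac15$. For the induction, use $c_{k+1}/c_k=\frac{2k+1}{2k+2}$ and compare with
\[
\frac{(2k+1)^{-1}\text{ at }k+1}{(2k+1)^{-1}\text{ at }k}=\frac{2k+1}{2k+3}.
\]
Since $\frac{2k+1}{2k+2}\ge\frac{2k+1}{2k+3}$, the ratio $c_k(2k+1)$ is increasing in $k$, and the inequality propagates from $k=2$. Nonnegativity of $x$ then finishes this part.

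\textbf{Estimate \eqref{eq:H3_der}.} Set $\psi(x)=L(x)-\frac12\log\frac1{1-x}+\frac x6$. The claim is exactly $\psi(x)\le\psi(y)$ for $y\le x$, so it suffices to show $\psi$ is nonincreasing, i.e.\ $\psi'\le0$ on $[0,1)$. Using the series $\frac12\log\frac1{1-x}=\sum_{k\ge1}\frac{x^k}{2k}$, we get
\[
\psi(x)=1+\sum_{k\ge1}\Big(\frac1{2k+1}-\frac1{2k}\Big)x^k+\frac x6 .
\]
The $k=1$ coefficient is $\frac13-\frac12+\frac16=0$. Every coefficient with $k\ge2$ is negative. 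Hence $\psi$ is a power series with nonpositive coefficients beyond the constant term, so it is nonincreasing on $[0,1)$, which gives \eqref{eq:H3_der}.

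\textbf{Main obstacle.} The only genuinely nontrivial step is the coefficient comparison for the upper bound. The ratio/induction argument above handles it, with the base case $k=2$ being equality.

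If a series-free argument is preferred, for instance for formalization, the same statements can be checked through derivatives. Each bound is equality at $x=0$, so it is enough to verify the corresponding derivative inequalities, which reduce to elementary inequalities for $\log$ and $\sqrt{\cdot}$. The series approach is cleaner, and convergence on $[0,1)$ is standard.
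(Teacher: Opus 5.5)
Your proof is correct. For the upper bound it takes a genuinely different route from the paper.

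\textbf{How the paper argues.} The paper avoids series and works with closed forms and derivatives.
\begin{itemize}
\item For the lower bound it integrates $(1-x\lambda^2)^{-1}\ge 1+x\lambda^2$. This is the same as your truncation of the series.
\item For the upper bound it substitutes $v=\sqrt{x}$ and uses $L(v^2)=\frac{1}{2v}\log\frac{1+v}{1-v}$. It then shows that $g(v)=v\,(\frac{7+v^2}{15}+\frac{8}{15h(v)})-\frac12\log\frac{1+v}{1-v}$ satisfies $g(0)=0$ and $g'(v)=\frac{(1-h(v))^3(3h(v)^2+9h(v)+8)}{15h(v)^3}\ge 0$.
\item For \eqref{eq:H3_der} it uses the identity $L'(x)=\frac{(1-x)^{-1}-L(x)}{2x}$ together with the lower bound $L\ge 1+x/3$. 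This gives $L'(x)\le\frac{1}{2(1-x)}-\frac16$, and integrating from $y$ to $x$ finishes.
\end{itemize}
The last step is exactly your statement $\psi'\le 0$, obtained without expanding anything.

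\textbf{What your approach buys.} Your coefficient comparison replaces the rather opaque factorization of $g'$ with a transparent fact: $(2k+1)\binom{2k}{k}4^{-k}$ is increasing in $k$. It also shows the upper bound agrees with $L$ through order $x^2$, the first slack being $\frac16$ versus $\frac17$ at $x^3$. This matches the factor $(1-h)^3=O(v^6)$ in the paper's $g'$. In addition, you get the exact expansion $\psi(x)=1-\sum_{k\ge 2}\frac{x^k}{2k(2k+1)}$, which shows precisely how much room \eqref{eq:H3_der} has.

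\textbf{What the paper's approach buys.} It avoids infinite series and term-by-term integration entirely. Everything reduces to finitely many algebraic identities plus one-variable monotonicity. That is presumably why it suits the Lean formalization.

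A small wording point: $c_k(2k+1)$ is a product, not a ratio.
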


\begin{proof}
The case $x=0$ is immediate, so we assume $x > 0$.

To prove the lower bound in \eqref{eq:H2_bounds}, we integrate the inequality
\[
(1-x\lambda^2)^{-1}\geq1+x\lambda^2.
\]
To prove the upper bound, we consider the function 
\[
 g(v)=v \Big(\frac{7+v^2}{15}+\frac8{15h(v)}\Big)
      - \frac{1}{2} \log \frac{1+v}{1-v},
\]
where $0<v\le 1$. 
Then $g(0)=0$ and
\[
 g'(v)=\frac{(1-h(v))^3\bigl(3h(v)^2+9h(v)+8\bigr)}{15h(v)^3}\geq0.
\]
Thus $g(v)\geq0$. Dividing by $v>0$ and taking $v=\sqrt x$ proves the
upper bound.

The inequality in \eqref{eq:H3_der} follows from 
\[ L'(x) =\frac{(1-x)^{-1}-L(x)}{2x} 
 \leq\frac1{2(1-x)}-\frac16. \]

\end{proof}

\begin{lemma}
    For all $x\in [0,1)$,
\begin{equation}
     \label{eq:int_bd}\frac{2}{3}\le \int_0^1 \frac{1-\lambda^2}{1-x\lambda^2} \mathrm{d}\lambda \le \frac{2+x}{3}.
 \end{equation}
\end{lemma}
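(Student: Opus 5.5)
The plan is to compare the integrand pointwise in $\lambda$ and then integrate, in the same way the lower bound in \eqref{eq:H2_bounds} was obtained. Write $I(x)=\int_0^1 \frac{1-\lambda^2}{1-x\lambda^2}\,\mathrm{d}\lambda$.

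For the lower bound, note that $0\le x\lambda^2\le 1$ for $\lambda\in[0,1]$ and $x\in[0,1)$. Hence $1-x\lambda^2\in(0,1]$, and so
\[
\frac{1-\lambda^2}{1-x\lambda^2}\ \ge\ 1-\lambda^2.
\]
Integrating over $[0,1]$ gives $I(x)\ge\int_0^1(1-\lambda^2)\,\mathrm{d}\lambda=\tfrac23$, with equality at $x=0$.

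For the upper bound, I would first express $I$ through $L$. The identity
\[
\frac{1-\lambda^2}{1-x\lambda^2}=\frac1x+\Big(1-\frac1x\Big)\frac{1}{1-x\lambda^2}
\]
holds for $x>0$, and gives $I(x)=\frac1x-\frac{1-x}{x}L(x)$. I do not expect to need it, though, because a pointwise bound should suffice here as well. Since $x\lambda^2\le\lambda^2$, we have
\[
\frac{1-\lambda^2}{1-x\lambda^2}\le 1,
\]
but integrating this only gives $I\le1$, which is weaker than $\tfrac{2+x}3$ for $x<1$. The sharper route is the elementary identity
\[
\frac{1-\lambda^2}{1-x\lambda^2}=(1-\lambda^2)+\frac{x\lambda^2(1-\lambda^2)}{1-x\lambda^2}.
\]
The second term is at most $x\lambda^2$, because $1-\lambda^2\le 1-x\lambda^2$ when $x\le1$. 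Therefore
\[
\frac{1-\lambda^2}{1-x\lambda^2}\le 1-\lambda^2+x\lambda^2,
\]
and integrating gives $I(x)\le \tfrac23+\tfrac x3=\tfrac{2+x}{3}$.

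Both steps are routine, and I expect no real obstacle. The only point needing care is the upper bound: one has to use the decomposition with $1-\lambda^2$ split off, so that the factor $\frac{1-\lambda^2}{1-x\lambda^2}\le1$ is applied to the remainder term rather than to the whole integrand. Applying it to the whole integrand would lose the sharp constant. The case $x=0$ is covered directly, since both bounds then equal $\tfrac23$.
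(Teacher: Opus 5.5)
Your proof is correct and is essentially the paper's argument. The paper also integrates the pointwise bounds $1-\lambda^2\le \frac{1-\lambda^2}{1-x\lambda^2}\le 1-(1-x)\lambda^2$; your upper bound $1-\lambda^2+x\lambda^2$ is exactly the same inequality, and splitting off $1-\lambda^2$ is simply your justification for it.
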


\begin{proof} 
To prove the lower and upper bounds in \eqref{eq:int_bd}, we integrate the inequalities
\[
 1-\lambda^2\leq\frac{1-\lambda^2}{1-x\lambda^2}
 \leq1-(1-x)\lambda^2.
\] 
\end{proof}

 \begin{lemma}
    For $x\ge 1$, we have
    \begin{equation}
    \label{eq:log-est}\log x\leq \frac{x-x^{-1}}{2}.
\end{equation}
\end{lemma}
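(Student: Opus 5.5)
The plan is to study the auxiliary function
\[
 \psi(x)=\frac{x-x^{-1}}{2}-\log x,\qquad x\ge 1,
\]
and show it is nonnegative by a monotonicity argument. At the endpoint $x=1$ we have $\psi(1)=\frac{1-1}{2}-0=0$, so it suffices to show that $\psi$ is nondecreasing on $[1,\infty)$.

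Differentiating,
\[
 \psi'(x)=\frac12+\frac{1}{2x^2}-\frac1x=\frac{x^2-2x+1}{2x^2}=\frac{(x-1)^2}{2x^2}\ge 0 .
\]
Hence $\psi$ is nondecreasing, and for every $x\ge1$ we get $\psi(x)\ge\psi(1)=0$, which is exactly \eqref{eq:log-est}.

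Alternatively, one can integrate directly: $\log x=\int_1^x \frac{\mathrm{d}y}{y}$ and $\frac{x-x^{-1}}{2}=\int_1^x \frac12\bigl(1+y^{-2}\bigr)\,\mathrm{d}y$. By AM--GM, $\frac12(1+y^{-2})\ge y^{-1}$ pointwise, so comparing the integrands gives the inequality.

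There is no real obstacle here. The only point to check is the factorization of the numerator of $\psi'$ as a perfect square, which is immediate. Equality holds only at $x=1$, since $\psi'>0$ for $x>1$.
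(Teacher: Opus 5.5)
Your proof is correct and is essentially the paper's argument: the paper differentiates $x-x^{-1}-2\log x$ (which is $2\psi$), gets $(1-x^{-1})^2\ge 0$, and uses that it vanishes at $x=1$. Your alternative integral comparison via $\tfrac12(1+y^{-2})\ge y^{-1}$ is the same computation written in integrated form.
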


\begin{proof} 
We have
\[
 \frac{\mathrm{d}}{\mathrm{d} x}\bigl(x-x^{-1}-2\log x\bigr)
 =(1-x^{-1})^2\geq0,
\]
and the expression in parentheses on the left hand side vanishes at $x=1$. 
\end{proof}

 \section{Proof of Proposition \ref{prop:main}} 
 The strategy is to first rewrite $\partial_aF$ in suitable coordinates and factor it into a positive factor times $\Phi$ in \eqref{eq:Phi} below,   reducing   Proposition \ref{prop:main} to proving $\Phi>0$. 
 
 Recall that $a,b>0$, $a+b<1$, and $|t|<1$. 
Define $\gamma_0=\gamma_0(a,b)$,  $y=y(a,b)$, $z=z(a,b)$, $w=w(a,b,t)$, $\eta=\eta(a,b)$ by 
\begin{equation}
    \label{eq:newcoord}
    \gamma_0=\gamma(a,b,0),\qquad  y=\frac{b}{\gamma_0^2+b^2},\qquad z=\frac{ab}{\gamma_0^2},  \qquad w=tz,\qquad  \eta=1-z^2. 
\end{equation}

Since $h(a-b)^2-h(a+b)^2=4ab$,
we obtain
\[
 h(a-b)-h(a+b)=2z\gamma_0, 
\]
and therefore
\begin{equation}
    \label{eq:r}
h(a-b)=\gamma_0(1+z),\qquad h(a+b)=\gamma_0(1-z).
\end{equation}
Taking the product and sum of squares 
 gives 
\begin{equation}
    \label{eq:rprod} h(a-b) h(a+b)=\gamma_0^2\eta,\qquad
 1-a^2-b^2=\gamma_0^2(1+z^2).
\end{equation}

We next derive the bounds on the new coordinates $y,z,w$. We claim that  $0<z<y<1$ and $|w|<z$. Indeed, substituting 
$z$ into the second identity in \eqref{eq:rprod} gives
\[
 (1-a^2-b^2)\gamma_0^2=\gamma_0^4+a^2b^2,
\]
and hence
\begin{equation}\label{eq:expanded-T-product}
 (\gamma_0^2+a^2)(\gamma_0^2+b^2)
 =\gamma_0^4+(a^2+b^2)\gamma_0^2+a^2b^2=\gamma_0^2.
\end{equation}
On the other hand, expanding the definition of $\gamma_0$ and the first equality in \eqref{eq:rprod} gives
\[
 \gamma_0^2=\tfrac{1}{2}(1-a^2-b^2+\gamma_0^2\eta).
\]
Consequently, 
\[
 \gamma_0^2+b^2-b=\tfrac{1}{2}((1-b)^2-a^2+\gamma_0^2\eta)>0,\qquad 
 \gamma_0^2+a^2-a=\tfrac{1}{2}((1-a)^2-b^2+\gamma_0^2\eta)>0.
\]
 where the inqualities follow since $a+b<1$. 
The first inequality implies $0<y<1$. Using
\eqref{eq:expanded-T-product}, the second implies $
 0<z/y<1.$
Finally,  $|w|<z$ follows from $|t|<1$.

Next we derive two identities for $\beta$ and $\gamma$ in terms of the introduced quantities.
\begin{lemma}     \label{lem:Be} Let $\beta= \beta(a,b,t)$ and $\gamma= \gamma(a,b,t)$ be as   in \eqref{eq:r2b} and \eqref{eq:eC}. 
    Then
    \begin{equation}  \label{eq:Be}
 \beta=\frac{2yL(y^2)}b
   +\frac{2ta}{b^2}\Big(yL(y^2)-\frac zaL(z^2)\Big),
 \qquad \gamma=\gamma_0(1+w).
\end{equation}
\end{lemma}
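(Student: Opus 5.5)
The formula for $\gamma$ follows directly from \eqref{eq:r}. For $\beta$, the plan is to split \eqref{eq:r2b} into a symmetric and an antisymmetric part in $t$,
\[
\beta=\tfrac12\big(k(a,b)+k(-a,b)\big)+\tfrac t2\big(k(a,b)-k(-a,b)\big),
\]
and to evaluate both combinations in closed form through logarithms.

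For $\gamma$, insert $h(a-b)=\gamma_0(1+z)$ and $h(a+b)=\gamma_0(1-z)$ into \eqref{eq:eC}. This gives
\[
\gamma=\tfrac{\gamma_0}{2}\big((1+t)(1+z)+(1-t)(1-z)\big)=\gamma_0(1+tz)=\gamma_0(1+w).
\]

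For $\beta$, use the definition of $k$ through $K$. Here $u=\tfrac12(1+x+b)$ and $v=\tfrac12(1+x-b)$ satisfy $u-v=b$, so $k(x,b)=D(u\mid v)/b^2$. Replacing $x=a$ by $x=-a$ sends $(u,v)$ to $(1-v,1-u)$. Since $D(1-v\mid 1-u)=D(v\mid u)$, we get
\[
b^2\big(k(a,b)\pm k(-a,b)\big)=D(u\mid v)\pm D(v\mid u).
\]
Both combinations are elementary logarithms:
\[
D(u\mid v)+D(v\mid u)=b\log\frac{(1+a+b)(1-a+b)}{(1+a-b)(1-a-b)},
\]
\[
D(u\mid v)-D(v\mid u)=(1+a)\log\frac{1+a+b}{1+a-b}+(1-a)\log\frac{1-a-b}{1-a+b}.
\]
On the other side, \eqref{eq:H1_expr} gives $2xL(x^2)=\log\frac{1+x}{1-x}$ for $x=y$ and $x=z$. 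So the claimed identity for $\beta$ is equivalent to two logarithmic identities:
\[
\log\frac{(1+a+b)(1-a+b)}{(1+a-b)(1-a-b)}=2\log\frac{1+y}{1-y},
\]
\[
D(u\mid v)-D(v\mid u)=2a\log\frac{1+y}{1-y}-2\log\frac{1+z}{1-z}.
\]

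The key algebraic step, which I expect to be the only real obstacle, is to express $(1+y)/(1-y)$ in terms of $a,b$. Set $\alpha_\pm=\sqrt{(1\pm b)^2-a^2}$. Then $h(a-b)h(a+b)=\alpha_+\alpha_-$. By the first identity in \eqref{eq:rprod}, the relation derived above for $\gamma_0^2+b^2-b$, and its analogue with $+b$ (proved the same way), we get
\[
2(\gamma_0^2+b^2\mp b)=\alpha_\mp^2+\alpha_+\alpha_-=\alpha_\mp(\alpha_++\alpha_-).
\]
Since $y=b/(\gamma_0^2+b^2)$, it follows that
\[
\frac{1+y}{1-y}=\frac{\gamma_0^2+b^2+b}{\gamma_0^2+b^2-b}=\frac{\alpha_+}{\alpha_-}.
\]
This proves the symmetric identity, because its left side is exactly $\log(\alpha_+^2/\alpha_-^2)$.

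For the antisymmetric identity, split $(1\pm a)=1\pm a$ in the two log terms. The terms carrying the factor $a$ combine to $a\log(\alpha_+^2/\alpha_-^2)=2a\log\frac{1+y}{1-y}$. The remaining terms give
\[
\log\frac{(1+a+b)(1-a-b)}{(1+a-b)(1-a+b)}=\log\frac{h(a+b)^2}{h(a-b)^2}=2\log\frac{1-z}{1+z},
\]
using \eqref{eq:r}. This yields the antisymmetric identity.

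Finally, divide the symmetric identity by $2b$ and the antisymmetric identity by $2b^2$, and substitute both into the splitting of $\beta$. This gives
\[
\beta=\frac{2yL(y^2)}{b}+\frac{2ta}{b^2}\Big(yL(y^2)-\frac zaL(z^2)\Big),
\]
which is \eqref{eq:Be}.
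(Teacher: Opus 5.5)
Your proof is correct, and its skeleton matches the paper's. Both reduce $k(\pm a,b)$ to elementary logarithms and identify these with $2yL(y^2)$ and $2zL(z^2)$ through $\frac{1+y}{1-y}=\alpha_+/\alpha_-$. The routes differ in two places.

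\emph{Computing $k(\pm a,b)$.} The paper substitutes $x=a+b\lambda$ in the integral formula for $k$. It then writes $k(a,b)$ and $k(-a,b)$ separately as combinations of $\int_{a-b}^{a+b}\frac{dx}{1-x^2}$ and $\int_{a-b}^{a+b}\frac{x\,dx}{1-x^2}$. You avoid the integral altogether. You use $k=D/b^2$, the reflection $D(1-v\mid 1-u)=D(v\mid u)$, and the split of $\beta$ into its $t$-even and $t$-odd parts. The two computations are equivalent: in the paper's variables, $k(a,b)+k(-a,b)=\frac2b\int\frac{dx}{1-x^2}$ and $k(a,b)-k(-a,b)=\frac{2}{b^2}\int\frac{a-x}{1-x^2}\,dx$. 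Your version shows at once that the two terms of \eqref{eq:Be} are exactly the $t$-independent and $t$-linear parts of $\beta$. The paper's version gets the $z$-term $\log\frac{h(a-b)}{h(a+b)}$ directly from $\int\frac{x\,dx}{1-x^2}$, whereas you have to regroup the $(1\pm a)$-weighted logarithms.

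\emph{The key identity.} The paper expands \eqref{eq:expanded-T-product} and completes squares, $(\gamma_0^2+b^2\pm b)^2=(\gamma_0^2+b^2)[(1\pm b)^2-a^2]$, which gives $\bigl(\frac{1+y}{1-y}\bigr)^2=\alpha_+^2/\alpha_-^2$. You instead use $2\gamma_0^2=1-a^2-b^2+\alpha_+\alpha_-$ together with the factorization $\alpha_\mp^2+\alpha_+\alpha_-=\alpha_\mp(\alpha_++\alpha_-)$. This yields the ratio itself without squaring. It also reuses the relation for $\gamma_0^2+b^2-b$ that the paper already derived when showing $0<y<1$.
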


\begin{proof}
Using  \eqref{eq:r}, 
\[\gamma=\tfrac{1}{2}\gamma_0((1+t)(1+z)+(1-t)(1-z))=\gamma_0(1+tz) = \gamma_0(1+w).\]

For $\beta$, 
substituting $x=a+b\lambda$ in the definition of $k$ gives
\[
 k(a,b)=\int_{-1}^1\frac{1-\lambda}{1-(a+b\lambda)^2}\,\mathrm{d}\lambda
 =\frac1{b^2}\int_{a-b}^{a+b}\frac{a+b-x}{1-x^2}\,\mathrm{d} x.
\]
Similarly, 
\[
 k(-a,b)=\frac1{b^2}\int_{a-b}^{a+b}\frac{x-a+b}{1-x^2}\,\mathrm{d} x.
\]
To evaluate these integrals we use 
\begin{equation}
    \label{eq:int1}
     \int_{a-b}^{a+b}\frac{\mathrm{d} x}{1-x^2}
 =\Big[\frac12\log\frac{1+x}{1-x}\Big]_{a-b}^{a+b}
 =2yL(y^2),
\end{equation}
\begin{equation}
    \label{eq:int2}
     \int_{a-b}^{a+b}\frac{x\,\mathrm{d} x}{1-x^2}
 =\Big[-\frac12\log(1-x^2)\Big]_{a-b}^{a+b}
 =\log\frac{h(a-b)}{h(a+b)}=2zL(z^2), 
\end{equation}
which gives
\[
 k(a,b)=\frac{2(a+b)yL(y^2)-2zL(z^2)}{b^2},\qquad
 k(-a,b)=\frac{2(b-a)yL(y^2)+2zL(z^2)}{b^2}.
\]
Substituting these in the definition of  $\beta$ yields the claim. 

To see that the logarithms in \eqref{eq:int1} and \eqref{eq:int2} evaluate to  $2yL(y^2)$ and $2zL(z^2)$, respectively, 
we first observe that by  \eqref{eq:H1_expr}, 
\[
 2zL(z^2) =\log\frac{1+z}{1-z} = \log\frac{h(a-b)}{h(a+b)}.
\]
Further, we have
\[
 2yL(y^2)=\log\frac{1+y}{1-y}
 =\frac12\log\frac{(1+a+b)(1-a+b)}{(1+a-b)(1-a-b)}.
\]
This can be seen by first expanding 
\eqref{eq:expanded-T-product} to obtain
\[
 (\gamma_0^2+b^2)^2+b^2=(1-a^2+b^2)(\gamma_0^2+b^2).
\]
Adding or subtracting $2b(\gamma_0^2+b^2)$ on the two sides gives
\[
 (\gamma_0^2+b^2\pm b)^2
 =(\gamma_0^2+b^2)[(1\pm b)^2-a^2].
\]
Using the definition of $y$, we therefore have
\[
 \Big(\frac{1+y}{1-y}\Big)^2
 =\frac{(\gamma_0^2+b^2+b)^2}{(\gamma_0^2+b^2-b)^2}
 =\frac{(1+b)^2-a^2}{(1-b)^2-a^2},
\]
 which gives the desired formula for $2yL(y^2)$.
\end{proof}

We derive another identity for $\beta$. Let
$\theta=\theta(a,b,t)$ be given by
\begin{equation}
    \label{eq:E}
     \theta=\eta\int_0^1
 \frac{1-\lambda^2}{(1-{z^2}\lambda^2)(1-{y^2}\lambda^2)}\,\mathrm{d}\lambda
 =\eta\frac{\eta L(z^2)-(1-y^2)L(y^2)}{y^2-z^2}.
\end{equation}
\begin{lemma}
    \label{lem:B}
    The following identity holds:
   \begin{equation} \label{eq:B}
       \frac{ b }{2y} \beta = L(y^2) - \frac{w \theta}{\eta}.
   \end{equation} 
\end{lemma}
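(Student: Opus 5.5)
The identity \eqref{eq:B} will follow from the formula for $\beta$ in Lemma~\ref{lem:Be}, together with a partial fraction evaluation of $\theta$ and two algebraic relations among $a,b,y,z$. Multiplying \eqref{eq:Be} by $b/(2y)$ gives
\[
 \frac{b}{2y}\beta = L(y^2) + \frac{t}{by}\Big(a y L(y^2) - z L(z^2)\Big).
\]
Since $w=tz$, \eqref{eq:B} is therefore equivalent to
\[
 a y L(y^2) - z L(z^2) = -\frac{byz}{\eta}\,\theta .
\]
(The case $t=0$ is trivial, and for $t\neq 0$ we may divide by $t$.)

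First I would justify the closed form of $\theta$ in \eqref{eq:E}, using the partial fraction decomposition
\[
 \frac{1-\lambda^2}{(1-z^2\lambda^2)(1-y^2\lambda^2)}
 = \frac{\eta}{y^2-z^2}\cdot\frac{1}{1-z^2\lambda^2} - \frac{1-y^2}{y^2-z^2}\cdot\frac{1}{1-y^2\lambda^2}.
\]
The coefficients come from evaluating at $\lambda^2=1/z^2$ and $\lambda^2=1/y^2$; this uses $y\neq z$, which holds since $0<z<y<1$. Integrating over $[0,1]$ and using the definition of $L$ gives the second expression in \eqref{eq:E}. Substituting it, the required identity becomes
\[
 a y L(y^2) - z L(z^2) = -\frac{byz}{y^2-z^2}\Big(\eta L(z^2) - (1-y^2)L(y^2)\Big).
\]

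I will not use any relation between $L(y^2)$ and $L(z^2)$, so I match coefficients separately. This requires the two algebraic identities
\[
 b y\eta = y^2-z^2, \qquad a(y^2-z^2) = bz(1-y^2).
\]
To verify them, write $p=\gamma_0^2$, so that $y=b/(p+b^2)$, $z=ab/p$, and $(p+a^2)(p+b^2)=p$ by \eqref{eq:expanded-T-product}.

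The first identity is equivalent to $b-y=z^2(b-1/y)$. Here $b-1/y=-p/b$ and $b-y=b(p+b^2-1)/(p+b^2)$, so it reduces to
\[
 p(p+b^2)-p=-a^2(p+b^2),
\]
which is exactly \eqref{eq:expanded-T-product}.

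For the second identity I would use the first to replace $y^2-z^2$ by $by\eta$. It then reads $a y(1-z^2) = z(1-y^2)$. Multiplying by $p(p+b^2)^2/b$ turns it into the polynomial identity
\[
 a(p^2-a^2b^2)(p+b^2)=a\big((p+b^2)^2-b^2\big)p .
\]
This should again follow from \eqref{eq:expanded-T-product}, since $p^2-a^2b^2 = p(1-a^2-b^2)$ and $(p+b^2)^2-b^2=(p+b^2)(1-a^2+b^2)-b^2$, using the expansions in the proof of Lemma~\ref{lem:Be}. The relation $z/y=a/(p+a^2)$, also a consequence of \eqref{eq:expanded-T-product}, can be used to keep this symmetric in $a$ and $b$.

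The only real obstacle is this bookkeeping: checking the second polynomial identity cleanly modulo the single relation \eqref{eq:expanded-T-product}. The analytic content is just the partial fraction step.
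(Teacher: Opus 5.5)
Your reduction is sound and is essentially the paper's argument. Your identities $by\eta=y^2-z^2$ and $a(y^2-z^2)=bz(1-y^2)$ are equivalent, via $y=b/(\gamma_0^2+b^2)$ and $z/(ay)=(\gamma_0^2+b^2)/\gamma_0^2$, to the paper's $y^2-z^2=b^2\eta/(\gamma_0^2+b^2)$ and $1-y^2=\gamma_0^2\eta/(\gamma_0^2+b^2)$. After that, both arguments simply match the coefficients of $L(y^2)$ and $L(z^2)$. Your partial-fraction derivation of the closed form of $\theta$, which the paper takes for granted, is correct, and so is your check of the first identity.

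The verification of the second identity, however, fails as written. To get your polynomial identity from $ay(1-z^2)=z(1-y^2)$ you must multiply by $p^2(p+b^2)^2/b$, not $p(p+b^2)^2/b$; the polynomial identity itself is the right one. More importantly, both relations you invoke are false. With your $p=\gamma_0^2$, \eqref{eq:expanded-T-product} gives $p^2+a^2b^2=p(1-a^2-b^2)$, not $p^2-a^2b^2=p(1-a^2-b^2)$. The expansion in the proof of Lemma~\ref{lem:Be} gives $(p+b^2)^2-b^2=(p+b^2)(1-a^2+b^2)-2b^2$, not $(p+b^2)(1-a^2+b^2)-b^2$. Substituting your versions turns the identity into $p+b^2=\tfrac12$, which is false in general.

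The identity is nonetheless true, since directly
\[
(p^2-a^2b^2)(p+b^2)-p\bigl((p+b^2)^2-b^2\bigr)=-b^2\bigl[(p+a^2)(p+b^2)-p\bigr]=0 .
\]
A simpler route, which is what the paper does, avoids the polynomial altogether. Because $z/y=a(p+b^2)/p$, your reduced identity $ay\eta=z(1-y^2)$ is equivalent to $(p+b^2)(1-y^2)=p\eta$, and
\[
(p+b^2)(1-y^2)=p+b^2-\frac{b^2}{p+b^2}=p+b^2-\frac{b^2(p+a^2)}{p}=p-\frac{a^2b^2}{p}=p\eta,
\]
using \eqref{eq:expanded-T-product} in the form $1/(p+b^2)=(p+a^2)/p$. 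With this repair the proof is complete.
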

\begin{proof}
By  the definition of $y$,  
\[
 (\gamma_0^2+b^2)(1-y^2)
 =\gamma_0^2+b^2-\frac{b^2}{\gamma_0^2+b^2}
 =\gamma_0^2-\frac{a^2b^2}{\gamma_0^2}=\gamma_0^2 \eta.
\]
It follows that
\[
 1-y^2=\frac{\gamma_0^2\eta}{\gamma_0^2+b^2},\qquad
 y^2-z^2=\eta-(1-y^2)=\frac{b^2\eta}{\gamma_0^2+b^2}.
\]
Substituting these identities into the right-hand side of \eqref{eq:E} gives
\[
\theta =\frac \eta{b^2}
 \bigl((\gamma_0^2+b^2)L(z^2)-\gamma_0^2L(y^2)\bigr).
\]
Multiplying both sides of the identity for $\beta$ in \eqref{eq:Be} by $(\gamma_0^2+b^2)/2$ yields
\[
\frac{\gamma_0^2+b^2}{2} \beta = (\gamma_0^2+b^2) \Big( \frac{yL(y^2)}b
   +\frac{ta}{b^2}\Big(yL(y^2)-\frac zaL(z^2)\Big) \Big).
\]
By the definition of $y$, $z$, $w$, and the identity for $\theta$, we obtain the chain of inequalities with
\[
 \frac{b}{2y} \beta = L(y^2)
   + \frac{ta}{b}\Big(L(y^2)- \frac{\gamma_0^2+b^2}{\gamma_0^2} L(z^2)\Big) = L(y^2)
   - \frac{w \theta}{\eta} .
\]
\end{proof}

We will use  $u=ta$. 
Differentiation  yields $\frac{d}{da}\gamma_0(a,b)=-(a(\gamma_0^2+b^2))/{\gamma_0^3\eta}$. Together with the identity for $\gamma$ in \eqref{eq:Be} and for $\beta$ in \eqref{eq:B}, this yields 
\[ \frac{d}{da}(\gamma(a,b,ua^{-1}))=-\frac{a(\gamma_0^2+b^2)}{\gamma_0^3\eta}(1-w)
,\qquad \frac{d}{da}(\beta(a,b,ua^{-1}))=\frac{4a}{\gamma_0^4\eta^2}(1-w\mu),
\]
where $ \mu= \mu(a,b,t)$ is given by
\begin{equation}
    \label{eq:M}
     \mu=1+\frac {\eta}{2}\int_0^1\frac{1-\lambda^2}{1-z^2\lambda^2}\,\mathrm{d}\lambda
   =\frac{1+z^2-\eta^2L(z^2)}{2z^2}.
\end{equation}
Also, $\frac{d}{da}(\delta(a,b,ua^{-1}))=\frac{d}{da}(\gamma(a,b,ua^{-1}))$.

By the product rule and formulas above, 
\[
\partial_a G(a,b,u) = \frac{d}{da}((\delta^2\beta)(a,b,ua^{-1}))=2\delta\beta \frac{d}{da}(\gamma(a,b,ua^{-1}))+\delta^2\frac{d}{da}(\beta(a,b,ua^{-1})) \]
\[=\frac{4a\delta^2}{\gamma_0^4\eta^2}(1-w\mu)
 -\frac{2a\delta\beta (\gamma_0^2+b^2)}{\gamma_0^3\eta}(1-w).
\]
Using $\gamma_0^2+b^2=b/y$ and 
 Lemma \ref{lem:B} gives 
 \[\frac{d}{da}((\delta^2\beta)(a,b,ua^{-1})) = \frac{4a\delta}{\gamma_0^3\eta^2}\Phi,\]
where  $ \Phi= \Phi(a,b,t)$ is 
\begin{equation}\label{eq:Phi}
\Phi=\frac{\delta}{\gamma_0}(1-w\mu)-(1-w)\eta\Big( L(y^2)-\frac{w\theta}{\eta}\Big).
\end{equation}
Since the factor in front of $\Phi$   is positive, it suffices to prove that $\Phi$ is positive.

\subsection{Lower bounds for \texorpdfstring{$\Phi$}{positivity}} 
Recall the definitions of $\mu$ in \eqref{eq:M} and $\theta$ in \eqref{eq:E}. 
 Using \eqref{eq:int_bd} and 
$\eta/(1-z^2\lambda^2)\leq1$ in the integral defining $\theta$ gives
\begin{equation}\label{eq:MEbounds}
 1+\frac \eta 3\leq \mu\leq1+\frac{\eta(2+z^2)}6,
 \qquad
 \frac{2\eta}{3}\leq \theta\leq\frac{2+y^2}{3}.
\end{equation}

Since $|w|<z<1$, we have $1-w>0$.   We split our analysis at $w=0$ because in the expression for $\Phi$, the coefficients of
$\mu$ and $\theta$  change sign there. Denote 
\[\Gamma_-=\frac{\delta}{\gamma_0}\Big(1-w\Big(1+\frac \eta 3\Big)\Big)-(1-w)\Big(
 \eta\cdot  \Big(\frac{7+y^2}{15}+\frac8{15\sqrt{1-y^2}}\Big)
 -w\frac{2+y^2}{3} \Big),\]
\[\Gamma_+= \frac{\delta}{\gamma_0}\Big(1-w\Big(1+\frac{\eta(2+z^2)}6\Big)\Big)
 -(1-w)\Big (
 \eta \cdot \Big(\frac{7+y^2}{15}+\frac8{15\sqrt{1-y^2}}\Big)
 -\frac{2w\eta}{3}\Big). \]
 When $w\leq0$, equations
\eqref{eq:H2_bounds} and \eqref{eq:MEbounds} give 
\[\Phi\geq \Gamma_-,\]
while for $w\geq0$,
\[\Phi\geq \Gamma_+.  \]

We will also use a sharper lower bound in the region 
\begin{equation}\label{region-bounds}
    y\geq\frac45,\qquad  \frac zy\geq\frac45,\qquad  w\geq z^2.
\end{equation}
We rewrite 
\[
 \eta L(y^2)-w\theta =\eta(1-w)L(z^2)
 +\eta\Big(1+\frac{w(1-y^2)}{y^2-z^2}\Big)\bigl(L(y^2)-L(z^2)\bigr).
\]
Together with the formula for $\mu$, this gives
\[
 \Phi=\frac{\delta}{\gamma_0}\Big(1-\frac{w(1+z^2)}{2z^2}\Big)
 +\eta\cdot \Big(\frac{w\eta\delta }{2z^2\gamma_0}-(1-w)^2\Big)L(z^2) \]
 \[-(1-w)\eta\cdot \Big(1+\frac{w(1-y^2)}{y^2-z^2}\Big)
 (L(y^2)-L(z^2)).
\]

The coefficient of $L(z^2)$ is positive since 
\begin{equation}\label{eq:positive-HZ-coefficient}
 \eta\cdot \Big(\frac{w\eta \delta}{2z^2\gamma_0}-(1-w)^2\Big)
 \geq\frac{\eta^2(3z^2-1)}2>0.
\end{equation}
by the definition of the region \eqref{region-bounds}, $z^2\geq256/625>1/3$. Also,
$1-w\leq \eta$ and $\delta/\gamma_0\geq1+w\geq1+z^2$.)

On the other hand, the coefficient of 
$L(y^2)-L(z^2)$ is negative. 
Using \eqref{eq:log-est}   applied to  
$\sqrt{\eta/(1-y^2)}$, we obtain 
\[
 \frac12\log\frac \eta{1-y^2}
 \leq\frac{y^2-z^2}{2\sqrt{\eta(1-y^2)}}
 \leq\frac{y(y^2-z^2)}{2\sqrt{(y^2-z^2)(1-y^2)}}.
\]
Consequently, using also  \eqref{eq:H3_der}   with $y^2$ and $z^2$, 
\begin{equation}\label{eq:diffbound}
 L(y^2)-L(z^2)\leq\frac12\log\frac \eta{1-y^2}-\frac{y^2-z^2}{6} \leq
 (y^2-z^2)\Big(\frac{y}{2\sqrt{(y^2-z^2)(1-y^2)}}-\frac16\Big).
\end{equation}
Together with the estimate 
$L(z^2)\geq1+z^2/3$  this  gives 
\[\Phi \geq\Gamma_{++},\] where
\[
\Gamma_{++}=\frac{\delta}{\gamma_0}\Big(1-w\Big(1+\frac{\eta(2+z^2)}6\Big)\Big) -\eta(1-w)^2\Big(1+\frac {z^2}3\Big)  \]
      \begin{equation}\label{eq:Psi}   -(1-w)\eta\bigl(y^2-z^2+w(1-y^2)\bigr)
 \Big(\frac{y}{2\sqrt{(y^2-z^2)(1-y^2)}}-\frac16\Big).
\end{equation}

To complete the proof of the theorem  it remains to establish   $\Gamma_->0$ when $w\le 0$,  $\Gamma_{++}>0$ when  \eqref{region-bounds} holds, and $\Gamma_+>0$ in the remainig region where $w\ge 0$. 
\subsection{Change of variables}   We  
change variables again, to express $\Gamma_-,\Gamma_+,\Gamma_{++}$ using rational functions
and one remaining square root.  
Recall that $0<z<y<1$ and $w=tz$. 
First introduce 
 \begin{equation}\label{eq:roots-rhosigma}
 v=\frac{z}{y}.
\end{equation}

 We express $a,b,\gamma_0$ in terms of $y,v$. The identity \eqref{eq:expanded-T-product}  gives
$v=a/(\gamma_0^2+a^2)$ and hence
\[
 v\pm y
 =\frac{a(\gamma_0^2+b^2)\pm b(\gamma_0^2+a^2)}{\gamma_0^2}
 =(a\pm b)(1\pm z).
\]
Isolating $a\pm b$ and adding and subtracting the identites for $a+b,a-b$ gives 
\begin{equation}\label{eq:roots-abT}
 a=\frac{v(1-y^2)}\eta,
 \qquad b=\frac{y(1-v^2)}\eta=\frac{y(1-v^2)}\eta.
\end{equation}
Finally, $\gamma_0={h(y) h(v)}/\eta$ because $\gamma_0^2=ab/z$.

We begin with $\Gamma_-$. 
By definition, 
\[\Gamma_- = \frac{\delta}{\gamma_0}\mathcal{G}_- - (1-w)\Big(\eta\Big(\frac{7+y^2}{15}+\frac{8}{15h(y)}\Big)-\frac{(2+y^2)\,w}{3}\Big),\qquad \mathcal{G}_-=1-w\Big(1+\frac \eta 3\Big). \]
By \eqref{eq:roots-abT}, $ u-b=ta-b=(tv ( 1- y^2)-y(1-v^2))/\eta$. Also, 
  $ |u-b|\leq|t|a+b<a+b<1 $. Hence, with 
\begin{equation}\label{eq:Q-minus}
 \mathcal Q=\eta^2-(tv(1-y^2)-y(1-v^2))^2 =\eta^2\bigl(1-(u-b)^2\bigr)>0,
\end{equation}
we have $h(u-b)=\sqrt{\mathcal Q}/\eta$, and using $\gamma_0=h(y)h(v)/\eta$,
\[
\frac{\delta}{\gamma_0}=1+w+\frac{1-h(u-b)}{\gamma_0}=1+w+\frac{\eta-\sqrt{\mathcal Q}}{h(y)h(v)},
\]
Multiplying $\Gamma_-$ by $h(y)h(v)$ we obtain
\begin{equation}\label{eq:Phi-minus-identity}
 h(y)h(v)\,\Gamma_-=\mathcal F_--\mathcal{G}_-\sqrt{\mathcal Q},
\end{equation}
with
\[
\mathcal F_-=\mathcal{G}_- (\eta+(1+w)h(y)h(v))
\]
\begin{equation}\label{eq:F-minus}
 -(1-w)\Big(\frac{\eta h(v)(h(y)(7+y^2)+8)}{15}-\frac{(2+y^2)\,w\,h(y)h(v)}{3}\Big).
\end{equation}

For $\Gamma_+$, we similarly obtain 
\[ h(y)h(v)\,\Gamma_+=\mathcal F_+-\mathcal{G}_+\sqrt{\mathcal Q},\]
with $\mathcal{Q}$ as  in \eqref{eq:Q-minus}, 
\[
 \mathcal{G}_+=1-w\Big(1+\frac{\eta(2+z^2)}{6}\Big),
\]
\[
 \mathcal F_+=\mathcal{G}_+(\eta+(1+w)h(y)h(v))
 -(1-w)\Big(\frac{\eta h(v)(h(y)(7+y^2)+8) }{15}-\frac{2w \eta\,h(y)h(v)}{3}\Big).
\]

In case of  $\Gamma_{++}$, we obtain 
\[ h(y)h(v)\,\Gamma_{++}=\mathcal F_{++}-\mathcal{G}_+\sqrt{\mathcal Q},\]
where $\mathcal{G}_+,\mathcal{Q}$ are as before, and 
\[
 \mathcal F_{++}=\mathcal{G}_+(\eta+(1+w)h(y)h(v))
 -\eta h(y)h(v)(1-w)^2\Big(1+\frac{z^2}{3}\Big)
 \]
 \[
 -(1-w)\eta\bigl(y^2(1-v^2)+w(1-y^2)\bigr)\Big(\frac12-\frac{h(y)h(v)}{6}\Big).
\]

To show positivity of $\Gamma_-,\Gamma_+,\Gamma_{++}$, it suffices to show
\begin{equation}
    \label{eq:oldcoord-toshow}
    \mathcal F_--\mathcal{G}_-\sqrt{\mathcal Q}>0, \qquad \mathcal F_+-\mathcal{G}_+\sqrt{\mathcal Q}>0, \qquad \mathcal F_{++}-\mathcal{G}_+\sqrt{\mathcal Q}>0,
\end{equation}
in the respective regions. 

We now introduce new variables $j,k\in(0,1)$ by  
\begin{equation}\label{eq:roots-rationalization}
 v=\frac zy=\frac{2j}{1+j^2},\qquad
 y=\frac{2k}{1+k^2}.
\end{equation}
Note that the map $x\mapsto 2x/(1+x^2)$ is a bijection from $(0,1)$ to itself.
The reason for choosing this particular parametrization is that
\[
 1-v^2=\frac{(1-j^2)^2}{(1+j^2)^2},\qquad
 1-y^2=\frac{(1-k^2)^2}{(1+k^2)^2},
\]
so their square roots $h(v),h(y)$  are    rational  in new
coordinates. We have 
\[z=vy=\frac{4jk}{(1+j^2)(1+k^2)}.\]
 
Now, any
$0<j,k<1$ gives $0<v,y<1$, and these formulas give $a,b>0$ with
\[
 1-(a+b)=\frac{(1-v)(1-y)}{1+vy}>0.
\]
Thus the change of coordinates from $\{(j,k) \in (0,1)^2 \}$ to $\{ (a,b) \in (0,1)^2 \colon a+b < 1 \}$ is bijective.

In the variables $j,k,t$, the functions $\mathcal Q, \mathcal{G}_-,\mathcal F_-, \mathcal{G}_+,\mathcal F_+, \mathcal{G}_{++},\mathcal F_{++}$ are polynomials in $t$ with coefficients that are rational in $j,k$.  Consequently,
$\mathcal F_-^2-\mathcal{G}_-^2\mathcal Q_-$ has no square
roots and can be turned into a polynomial by clearing positive
denominators. This will be used in the next section. 

\subsection{Polynomial inequalities}
By abuse of notation, we will denote 
\[
\mathcal F_- = \mathcal F_-(j,k,t)
\]
to mean  \eqref{eq:F-minus} evaluated at $v=2j/(1+j^2),\,y=2k/(1+k^2)$, and likewise for $\mathcal G_\pm(j,k,t)$, $\mathcal Q(j,k,t)$, $\mathcal F_+(j,k,t)$, $\mathcal F_{++}(j,k,t)$,  $z(j,k)$, and $\eta(j,k)$. 

Now, to show \eqref{eq:oldcoord-toshow}, it suffices to show that 
\begin{equation}\label{eq:toshow-new}
 \mathcal F^2-\mathcal G^2\mathcal Q>0\qquad\text{and}\qquad\mathcal F>0
\end{equation}
holds at every point $(j,k,t)$, for $(\mathcal F,\mathcal G)=(\mathcal F_-,\mathcal G_-)$ when $-1<t\leq0$; for $(\mathcal F,\mathcal G)=(\mathcal F_+,\mathcal G_+)$ when $0\leq t<1$ and $(j,k,t)$ lies outside the region \eqref{region-bounds}; and for $(\mathcal F,\mathcal G)=(\mathcal F_{++},\mathcal G_+)$ when $0\leq t<1$ and $(j,k,t)$ lies in that region. In the coordinates $(j,k,t)$ the region \eqref{region-bounds} reads $j,k\geq\frac12$, $t\geq z(j,k)$. Indeed, these inequalities give
\[
 \mathcal F>|\mathcal G|\sqrt{\mathcal Q}\geq\mathcal G\sqrt{\mathcal Q},
\]
hence $h(y)h(v)\Gamma_->0$, $h(y)h(v)\Gamma_+>0$, and $h(y)h(v)\Gamma_{++}>0$, respectively.

 We now turn the squared inequalities \eqref{eq:toshow-new} into polynomial inequalities on $[0,1]^2\times [-1,0]$ and $[0,1]^2\times [0,1]$, respectively. The signs of $\mathcal F_-,\mathcal F_+,\mathcal F_{++}$ will be established in the end. 
Let \[
 \mathcal U_-=(0,1)^2\times[-1,0],\qquad  \mathcal U_+=(0,1)^2\times[0,1].\]
Thus, the variables $(j,k,t)$ range over
$\mathcal U_-$
 in the case of $\Gamma_-$ and over 
$\mathcal U_+$
 in the case of $\Gamma_+$ and $\Gamma_{++}$. The argument below covers the  boundaries $\pm 1$, although we don't need that.

Set $d=(1+j^2)(1+k^2)$  and define the positive denominators
\[
 d_{-}=15(1+k^2)d^5,\qquad
 d_+=15d^7,\qquad
 d_{++}=3d^7,
\]
Substitution of \eqref{eq:roots-rationalization} and
\eqref{eq:roots-rhosigma} show  that 
\[P_- = d_-^2
 (\mathcal F_-^2-\mathcal G_-^2\mathcal Q), \qquad  P_+ =\frac{d_+^2
 (\mathcal F_+^2-\mathcal G_+^2\mathcal Q)}{d^2-16j^2k^2}, \qquad P_{++}=\frac{d_{++}^2
 (\mathcal F_{++}^2-\mathcal G_+^2\mathcal Q)}{d^2-16j^2k^2}, \]
 are
polynomials in $j,k,t$ with integer coefficients. 
Note that $z={4jk}/{d}$.

Since $d^2-16j^2k^2=(d-4jk)(d+4jk)>0$ on $\mathcal{U}$,  division by it preserves the sign.
Thus, to establish the first inequality \eqref{eq:toshow-new},  it suffices to show that $P_->0$ on $\mathcal U_-$ and that $P_+,P_{++}>0$ on the respective subsets of $\mathcal U_+$ listed below.

For this we will use the   substitutions in
Table~\ref{tab:seven-substitutions}.
For     $\nu\in
\{A1,A2,\dots,C1\}$, let 
\[
 \chi_\nu:[0,1]^3\longrightarrow[0,1]^2\times[-1,1],
 \qquad x=(x_1,x_2,x_3)\mapsto(j,k,t),
\]
be the substitution indicated in columns $4-6$ of  Table~\ref{tab:seven-substitutions}. For instance,  $\chi_{A_1}(x)=(x_1x_2,x_2,-x_3)$, i.e. we substitute $j=x_1x_2, \, k=x_2,\, t=-x_3$.    
Set
\[
 \Omega_\nu
 =\{x\in[0,1]^3:\chi_\nu(x)\in\mathcal U\},
 \qquad
 \mathcal U_\nu=\chi_\nu(\Omega_\nu).
\]
 Each
restricted map $\chi_\nu:\Omega_\nu\to\mathcal U_\nu$
is a bijection and the regions  cover $\mathcal{U}_{\pm}$: 
\[
 \mathcal U_-
 =\mathcal U_{\mathrm{A1}}
  \cup\mathcal U_{\mathrm{A2}}
  \cup\mathcal U_{\mathrm{A3}},\qquad 
 \mathcal U_+
 =\mathcal U_{\mathrm{B1}}
  \cup\mathcal U_{\mathrm{B2}}
  \cup\mathcal U_{\mathrm{B3}}
  \cup\mathcal U_{\mathrm{C1}}.
\]
Thus, $\mathcal U_\nu$ is a region in the original coordinates $(j,k,t)$ 
(its defining conditions are listed in the third column), while $\Omega_\nu$ is a set of new parameters (last column).  In the $A$-rows, $t\leq0$, so that $-t=|t|\in[0,1]$.
The substitutions in     Table \ref{tab:seven-substitutions}  record ratios within the defining inequalities of $\mathcal{U}_\nu$.  In the row C1 we write \[\zeta(x)=\zeta_{\mathrm{1}}(x_1,x_2)+(1-\zeta_{\mathrm{1}}(x_1,x_2))\,x_3,\] with $\zeta_{\mathrm{3}}(x_1,x_2)=4x_1x_2(1+x_1^2)^{-1}(1+x_2^2)^{-1}$, which is $z$ in the new coordinates, and $\zeta_{\mathrm{1}}(x_1,x_2)=16(1+x_1)(1+x_2)(4+(1+x_1)^2)^{-1}(4+(1+x_2)^2)^{-1}$, which is $z$ at $(1+x_1)/2,(1+x_2)/2$. 

 \begin{table}[ht]
\centering
\small
\begin{tabular}{lllllll}
\hline
$\nu$ & Pol. &  $\mathcal U_\nu$ & $j$ & $k$ & $t$ &   $\Omega_\nu$\\
\hline
A1 & $P_{-}$ & $j\leq k$ & $x_1x_2$ & $x_2$ & $-x_3$ & $(0,1]\times(0,1)\times[0,1]$\\
A2 & $P_{-}$ & $k\leq -jt$ & $x_1$ & $x_1x_2x_3$ & $-x_3$ & $(0,1)\times(0,1]\times(0,1]$\\
A3 & $P_{-}$ & $-jt\leq k\leq j$ & $x_1$ & $x_1x_2$ & $-x_2x_3$ & $(0,1)\times(0,1]\times[0,1]$\\
B1 & $P_+$ & $j\leq 1/2$ & $\tfrac{1}{2}x_1$ & $x_2$ & $x_3$ & $(0,1]\times(0,1)\times[0,1]$\\
B2 & $P_+$ & $j\geq 1/2,\ k\leq 1/2$ & $\tfrac{1}{2}(1+x_1)$ & $\tfrac{1}{2}x_2$ & $x_3$ & $[0,1)\times(0,1]\times[0,1]$\\
B3 & $P_+$ & $t\leq z$ & $x_1$ & $x_2$ & $\zeta_{\mathrm{3}}(x_1,x_2)\,x_3$ & $(0,1)\times(0,1)\times[0,1]$\\
C1 & $P_{++}$ & $j,k\geq 1/2,\ t\geq z$ & $\tfrac{1}{2}(1+x_1)$ & $\tfrac{1}{2}(1+x_2)$
   & $\zeta(x)$ & $[0,1)\times[0,1)\times[0,1]$ \\
\hline
\end{tabular}
\label{tab:seven-substitutions} \medskip \caption{Column 2: the polynomial in $(j,k,t)$ whose positivity is proved on $\mathcal U_\nu$. Column 3: the region $\mathcal U_\nu$ in $(j,k,t)$. Columns 4--6: the substitution $(j,k,t)=\chi_\nu(x_1,x_2,x_3)$. Column 7:  the region $\Omega_\nu\subseteq[0,1]^3$ in $(x_1,x_2,x_3)$.}
\end{table}

For instance, in $A1$, the polynomial in new coordinates  is 
\[
 P_{-1}(x_1x_2,x_2,-x_3).
\]
After the substitutions, we divide out a positive factor wherever possible: in $A1$ we divide by $x_2^2$, in $A2$ by  $x_1^2x_3^2$, in $A3$ by $x_1^2x_2^2$, in $B3$ by $x_2^2$. 
We  obtain
\begin{align*}
 Q_{\mathrm{A1}}(x)&=x_2^{-2}\,P_-(x_1x_2,\,x_2,\,-x_3),\\
 Q_{\mathrm{A2}}(x)&=(x_1^2x_3^2)^{-1}\,P_-(x_1,\,x_1x_2x_3,\,-x_3),\\
 Q_{\mathrm{A3}}(x)&=(x_1^2x_2^2)^{-1}\,P_-(x_1,\,x_1x_2,\,-x_2x_3),\\
 Q_{\mathrm{B1}}(x)&=2^{24}\,P_+(\tfrac{x_1}{2},\,x_2,\,x_3),\\
 Q_{\mathrm{B2}}(x)&=2^{48}\,P_+(\tfrac{1+x_1}{2},\,\tfrac{x_2}{2},\,x_3),\\
 Q_{\mathrm{B3}}(x)&=x_2^{-2}\,R_+(x_1,x_2,x_3),\\
 Q_{\mathrm{C1}}(x)&=2^{56}\,R_{++}(\tfrac{1+x_1}{2},\,\tfrac{1+x_2}{2},\,x_3),
\end{align*}
where $R_+$ and $R_{++}$ are the polynomials in $(j,k,t)$ obtained from $P_+$, $P_{++}$ by rescaling the last variable to the interval $[0,z(j,k)]$, respectively $[z(j,k),1]$, and clearing denominators:
\[
 R_+(j,k,t)=d^4\,P_+\Big(j,k,\frac{4jk}{d}\,t\Big),
 \]
 \[
 R_{++}(j,k,t)=\frac{d^4}{(d-4jk)^2}\,P_{++}\Big(j,k,\frac{4jk+(d-4jk)\,t}{d}\Big).
\]
It now suffices to prove 
 $Q_\nu(x)>0$ for all $x\in \Omega_\nu$, for each $\nu$.

\subsection{Positivity of the seven polynomials}
We now verify positivity of the seven polynomials $Q_\nu$ constructed
in the previous section. We express 
each polynomial in  Bernstein
basis, which uses products of powers of $x_i$ and $1-x_i$, all
of which are non-negative on $[0,1]^3$. 

If $Q(x)=\sum_\alpha q_\alpha x^\alpha$ is a polynomial with integer
coefficients (here $\alpha=(\alpha_1,\alpha_2,\alpha_3)$,
$x^\alpha=x_1^{\alpha_1}x_2^{\alpha_2}x_3^{\alpha_3}$),  and  if  $n_i=\deg_{x_i}Q$,   we can write  
\begin{equation}\label{eq:bernstein-expansion}
 Q(x)=\sum_{\beta\leq n}b_\beta
 \prod_{i=1}^3x_i^{\beta_i}(1-x_i)^{n_i-\beta_i}.
\end{equation}
Here  
$\alpha\leq\beta$ means $\alpha_i\leq\beta_i$ for each $i$ and  for $0\leq\beta_i\leq n_i$,  
\begin{equation}\label{eq:bernstein-coefficients}
 b_\beta=\sum_{\alpha\leq\beta}q_\alpha
 \prod_{i=1}^3\binom{n_i-\alpha_i}{\beta_i-\alpha_i}.
\end{equation}

To see this, for one variable the binomial formula gives
\[
 x_i^{\alpha_i}
 =x_i^{\alpha_i}[x_i+(1-x_i)]^{n_i-\alpha_i}
 =\sum_{\beta_i=\alpha_i}^{n_i}
 \binom{n_i-\alpha_i}{\beta_i-\alpha_i}
 x_i^{\beta_i}(1-x_i)^{n_i-\beta_i}.
\]
Multiply these identities for $i=1,2,3$, multiply by $q_\alpha$,
and sum over $\alpha$. Collecting the coefficient   gives \eqref{eq:bernstein-coefficients}. 

Since  $x_i$ and  $1-x_i$ are non-negative on $[0,1]^3$, to   obtain $Q\geq0$ on $[0,1]^3$ it suffices to show 
$b_\beta\geq 0$ for every $\beta$.  
Furthermore, to conclude that $Q(x)>0$  at a point $x \in \Omega_{\nu}$,  it suffices to find one index
$\beta$ such that
\begin{equation}
    \label{pos_toshow}
     b_\beta>0,\qquad
 \prod_{i=1}^3x_i^{\beta_i}(1-x_i)^{n_i-\beta_i}>0.
\end{equation}
Such an index will be called a witness
for strict positivity at $x$.

We apply this to the seven polynomials $Q_\nu$. At an interior point of the cube every basis element in
\eqref{eq:bernstein-expansion} is strictly positive, so any positive
coefficient $b_\beta$ gives $Q_\nu(x)>0$. On a face $x_i\in\{0,1\}$,
however, the factors $x_i$ or $1-x_i$   vanish, so strict positivity there requires a positive
coefficient whose index has $\beta_i=0$ or $\beta_i=n_i$. 

Computing the coefficients
\eqref{eq:bernstein-coefficients}, we see 
that none of them is negative, and for each $\nu$ one can find one or two
indices $\beta$ with $b_\beta>0$  so that 
the product in \eqref{pos_toshow} is  strictly positive at every
point of $\Omega_\nu$. See 
Table~\ref{tab:bernstein-witnesses}, which lists for each $Q_\nu$ its degrees
$(n_1,n_2,n_3)$, the numbers of strictly positive and of vanishing
coefficients $b_\beta$, and the selected witness indices. For each row the two
counts sum to $(n_1+1)(n_2+1)(n_3+1)$, the number of all indices $\beta$ in
\eqref{eq:bernstein-expansion}. Altogether,  $52,335$ coefficients must be checked. 
This check can be performed easily with a computer program. Each of these computations has been formally verified as part of the Lean formalization.
Table~\ref{tab:witness-values} gives the exact integer
 values of the coefficients $b_\beta$ at the thirteen witness indices. 
\begin{table}[ht] 
\centering
\begin{tabular}{llrrl}
\hline
$\nu$ & $(n_1,n_2,n_3)$ & $b_\beta>0$ & $b_\beta=0$ & Witness $(\beta_1,\beta_2,\beta_3)$\\
\hline
A1 & $(20,42,4)$ & 4465 & 50 & $(20,1,0)$, $(20,1,4)$\\
A2 & $(42,24,24)$ & 26855 & 20 & $(1,24,24)$\\
A3 & $(42,24,4)$ & 5283 & 92 & $(1,24,0)$, $(1,24,4)$\\
B1 & $(24,24,4)$ & 3040 & 85 & $(24,2,0)$, $(24,2,4)$\\
B2 & $(24,24,4)$ & 3003 & 122 & $(0,24,0)$, $(0,24,4)$\\
B3 & $(32,30,4)$ & 5017 & 98 & $(1,1,0)$, $(1,1,4)$\\
C1 & $(28,28,4)$ & 4165 & 40 & $(0,0,0)$, $(0,0,4)$\\
\hline
\end{tabular}\medskip
\caption{The coefficient counts and thirteen positive witnesses.}
\label{tab:bernstein-witnesses}
\end{table}

To elaborate on Table \ref{tab:bernstein-witnesses}, an endpoint index $\beta_i=n_i$ is
used where the face $x_i=1$ lies in $\Omega_\nu$, so that the factor $1-x_i$
does not appear: $\beta_1=20$ in A1 (corresponding to $j=k$), $\beta_2=24$ in
A3 ($k=j$), $\beta_1=24$ in B1 ($j=1/2$), and $\beta_2=24$ in B2
($k=1/2$). An endpoint index $\beta_i=0$ is used where the face $x_i=0$
lies in $\Omega_\nu$, so that the factor $x_i$ does not appear: $\beta_1=0$ in
B2 ($j=1/2$) and $\beta_1=\beta_2=0$ in C1 ($j=k=1/2$). Where $x_i\in (0,1)$, any index is
admissible, and a small one is taken. In the third coordinate, $x_3$ ranges
over $[0,1]$, so both faces $x_3=0$ and $x_3=1$ belong to $\Omega_\nu$, and
each row lists the pair $\beta_3=0$ and $\beta_3=4$, whose basis factors
$(1-x_3)^4$ and $x_3^4$ cannot vanish simultaneously. The exception is A2,
where $k=x_1x_2x_3>0$ excludes $x_3=0$, so the single index
$\beta=(1,24,24)$ with $\beta_2=n_2$ and $\beta_3=n_3$ covers the admissible
faces $x_2=1$ and $x_3=1$.  

\begin{table}[ht]
\centering
\begin{tabular}{lrr}
\hline
$\nu$ & First witness coefficient & Second witness coefficient\\
\hline
A1 & $12600$ & $25200$\\
A2 & $25200$ & ---\\
A3 & $12600$ & $25200$\\
B1 & $109335937500$ & $8254492187500$\\
B2 & $3658583003906250000$ & $325224219836275200$\\
B3 & $288000$ & $288000$\\
C1 & $338787737221858106544$ & $62644543774312500000$\\
\hline
\end{tabular} \medskip
\caption{Values of the coefficients $b_\beta$ at the witness indices of Table~\ref{tab:bernstein-witnesses}.}
\label{tab:witness-values}
\end{table}

\subsection{\texorpdfstring{Positivity of $\mathcal{F}_-$, $\mathcal{F}_+$, $\mathcal{F}_{++}$}{Remaining positivity arguments}}
It remains to show the second inequality in \eqref{eq:toshow-new} for each $\mathcal{F}\in\{\mathcal{F}_-,\mathcal{F}_+,\mathcal{F}_{++}\}$.
From the first inequality in \eqref{eq:toshow-new}, and $Q_{\nu}>0$  we already know 
$\mathcal{F}^2  > 0$ for each such choice of $\mathcal{F}$. 

Let
$c_0=(1/2,1/2,1/2)$. 
 We will use continuity to determine
the sign from evaluation of the expressions in questions at $\chi_\nu(c_0)$. For instance, $\chi_{A1}(c_0)=(1/4,1/2,-1/2)$ and 
\[
 \mathcal F_{-}(1/4,1/2,-1/2)
 =\frac{16433260569}{22185265625}>\frac12.
\]
These lower bounds are listed in Table \ref{tab:positive-centers}.
\begin{table}[ht]
\centering
\begin{tabular}{lll}
\hline
$\nu$ & $\chi_\nu(c_0)$ &  Lower bound\\
\hline
A1 & $(1/4,1/2,-1/2)$ & $\mathcal F_{-}>1/2$\\
A2 & $(1/2,1/8,-1/2)$ & $\mathcal F_{-}>1/2$\\
A3 & $(1/2,1/4,-1/4)$ & $\mathcal F_{-}>1/2$\\
B1 & $(1/4,1/2,1/2)$ & $\mathcal F_+>1/2$\\
B2 & $(3/4,1/4,1/2)$ & $\mathcal F_+>1/2$\\
B3 & $(1/2,1/2,8/25)$ & $\mathcal F_+>1/2$\\
C1 & $(3/4,3/4,1201/1250)$ & $\mathcal F_{++} >1/100$\\
\hline
\end{tabular}\medskip
\caption{Seven   evaluations.}
\label{tab:positive-centers}
\end{table}

We now justify that these evaluations determine the signs throughout
the required regions, including the boundaries. Fix $\nu$
and $x\in\Omega_\nu$, 
$\chi_\nu(x)=(j,k,t)$. Join $x$ to $c_0$ by the line segment 
\[
 \ell(s)=(1-s)x+sc_0,\qquad 0\leq s\leq1.
\]
This segment lies in $\Omega_\nu$ for $0\leq s\leq1$,
and in $(0,1)^3$ for $0<s\leq1$.
Inspection of each substitution
shows that $\chi_\nu(\ell(s))$ then has $0<j,k<1$ and remains in
the same region.  In particular,  \eqref{eq:toshow-new}  holds on  this line segment, including its initial point.

Define
\[
f(s) = \begin{cases}
    \mathcal F_{-}(\chi_\nu(\ell(s))) & \text{for $\nu$ in A1--A3,} \\
    \mathcal F_+(\chi_\nu(\ell(s))) & \text{for $\nu$ in B1--B3,} \\
    \mathcal F_{++}(\chi_\nu(\ell(s))) & \text{for $\nu = \textrm{C1}$.} \\
\end{cases}
\]
This is a continuous
real function on $[0,1]$: the substituted expressions are rational,
and their denominators are positive along the line segment. By
\eqref{eq:toshow-new} 
\[
 f(s)^2>0,\qquad 0\leq s\leq1.
\]
Therefore $f$ never vanishes. Since $f(1)>0$ by
Table~\ref{tab:positive-centers}, the intermediate value theorem
implies $f(s)>0$ for all $s$, and in particular $f(0)>0$.
This finishes the proof.

\section{Reduction to the three-point inequality}\label{sec:reduction}

In this section we prove the weak Hellinger inequality \eqref{eq:weak-hellinger} from the three-point inequality, Proposition \ref{prop:three-point}, following the argument in  \cite{ABCNJ17}.  

Set $U=g(Y)$ and $V=f(X)$. The pair $(U,V)$ of random variables is characterized by three parameters, but there is still a fourth parameter $\rho$ and the four parameters are intertwined in some delicate way.
We can get rid of $\rho$ by using a data processing inequality involving the hypercontractivity parameter at zero \cite{ABCNJ17,KA15}
\begin{equation}\label{eqn:dataprocessing}
s_0(g(Y),f(X)) \le s_0(Y, X)=\rho^2,
\end{equation}
where $s_0(U, V)=\lim_{p\to 0+} s_p(U,V)$ and $s_p(U,V)$ for $p\in (0,1)$ denotes the (reverse) hypercontractivity parameter of $(U,V)$ defined as
\[ s_p(U,V)=\inf\big\{\tfrac{1-q}{1-p}:\ p\le q\le 1,\quad
\|\mathbf{E}[b(V)\mid U]\|_p\ge\|b(V)\|_q
\ \text{for every }b\ge0\big\},
\]
Thus it suffices to prove the stronger inequality
\begin{equation}\label{eqn:stronger}
h(\mathbf{E} V) - \mathbf{E}\,h(\mathbf{E}(V|U)) \le 1 - \sqrt{1-s_0(U,V)}.
\end{equation}
This is Conjecture 5 in \cite{ABCNJ17}.
Now set 
\[s=\mathbf{P}(U=1), c = \mathbf{P}(V=1 | U=-1), d=\mathbf{P}(V=-1 | U=1).\]
If $f$ or $g$ is constant, or if $|\rho|=1$, then there is nothing to show, thus we may assume without loss of generality that $s,c,d\in (0,1)$.

For $t\in [0,1]$ define
\[ J_{s,c,d}(t) = \min_{\substack{a,b\in[0,1]\\sa+\bar sb=t}} s\,D(a\mid \bar d)+\bar s D(b\mid c). \]
With this we can write the formula for $s_0(U,V)$ from \cite{BN16,Kam15} as
\[ 1-s_0(U,V)=
\inf_{\substack{t\in[0,1],\\t\not=s\,\bar d+\bar s\,c}} \frac{D(t\mid s\,\bar d+\bar s\,c)}{J_{s,c,d}(t)}. \]
Note that $s\,\bar d+\bar s\,c=\mathbf{P}(V=1)$. Next estimate the infimum from above by the value at $t=s\,\bar c+\bar s\,d$, where $J_{s,c,d}(s\,\bar c+\bar s\,d)=s D(\bar c \mid \bar d) + \bar s D(d \mid c)$. This gives
\[ \sqrt{1-s_0(U,V)} \le R(s,c,d), \]
whence \eqref{eqn:stronger} follows from \eqref{eq:three-point}.

\begin{remark}\label{rem:mutualinf}
By convexity of the monotone function $\varphi(t)=H_2((1-\sqrt{1-t^2})/2)$, $t\in [0,1]$, and the inequality \eqref{eqn:stronger} we also obtain
\begin{equation}\label{eqn:weakmutualinfs0} I(U; V)\le 1 - H_2\Big(\frac{1-\sqrt{s_0(U;V)}}2\Big),
\end{equation}
which again implies \eqref{eqn:weakmutualinf} by the data processing inequality \eqref{eqn:dataprocessing}.
\end{remark}

\end{document}